\documentclass[11pt]{article}
\usepackage[margin=1in]{geometry}
\usepackage{amsmath, amssymb, amsthm}
\usepackage{url}
\usepackage{xcolor,soul}
\usepackage{hyperref}
\usepackage[numbers,sort&compress]{natbib}
\usepackage{adjustbox}
\usepackage{bm}
\DeclareMathOperator{\R}{\mathbb{R}}
\DeclareMathOperator{\PLRV}{\mathrm{PLRV}}
\usepackage{amsmath, amsthm}
\newtheorem{remark}{Remark}
\newtheorem{prop}{Proposition}
\usepackage{amsthm}
\usepackage{amsmath}
\DeclareMathOperator{\Tr}{Tr}
\usepackage{thmtools} 
\newtheorem{corollary}{Corollary}
\theoremstyle{plain} 
\newtheorem{theorem}{Theorem}[section]
 
\theoremstyle{definition} 
\newtheorem{definition}{Definition}

\title{Bridging Differential Privacy and Random Triangles}
\author{
Tianxi Ji\\
Texas Tech University\\
Department of Computer Science\\
\texttt{tiji@ttu.edu}}
\date{}

\begin{document}
\maketitle
\vspace{-1.5em}
\begin{figure}[htb]
  \centering
        \adjustbox{trim=28pt 0pt 20pt  0pt,clip}
        {\includegraphics[width=0.95\linewidth]{images/random-triangles-parallel-orthogonal.drawio.png}}
\vspace{-1em}
    \caption{Noise perturbation implies random triangles.}
    \label{fig:geo-dp}
    \vspace{-1em}
\end{figure}

\begin{abstract}
The classical analysis of the Gaussian mechanism in differential privacy reduces privacy loss for a
pair of neighboring datasets to a scalar random variable. While this scalar characterization
is sufficient for   privacy accounting, each perturbation instance
also induces a high-dimensional random triangle formed by the sensitivity
vector and the two corresponding noise vectors. In this work, we develop two
complementary geometric representations of these random triangles.

The first representation maps the normalized squared edge lengths to a
 simplex. We derive its exact joint density, characterize its
elliptical support, and reconstruct the classical privacy loss random variable
from the simplex coordinates. The
second representation maps the spectral shape of each normalized triangle to
a hemisphere. We
derive the corresponding density and coordinate mappings, recover the same
privacy loss, and characterize an equatorial drift together with band
concentration as the dimension increases. These results provide two exact
geometric coordinate systems that complement the scalar privacy loss   and 
connect differential privacy with the probabilistic analysis of random
shapes.
\end{abstract}

\section{Introduction}
\label{sec:intro}

In high-dimensional data analysis and machine learning, the Gaussian
mechanism is a standard approach for achieving $(\epsilon,\delta)$-differential
privacy (DP)~\cite{dwork2014algorithmic}. Given an $M$-dimensional query/computation result  $f(\boldsymbol{x})\in\mathbb{R}^M$, it
releases $\mathcal{M}(\boldsymbol{x})=f(\boldsymbol{x})+\mathbf{n}$ with $\mathbf{n}\sim\mathcal{N}(\mathbf{0},\sigma^2 \mathbf{I}_M)$, 
where the scale of the noise $\sigma$ is calibrated to hide the sensitive difference vector $\bm{v}  \triangleq f(\bm{x})-f(\bm{x'})$ caused by any arbitrary pair of neighboring dataset, $\bm{x}$ and $\bm{x}'$, that differ only by one record.

The privacy guarantee of the Gaussian mechanism is determined by the distribution of the privacy loss random variable (PLRV), which simplifies algebraically to 
\begin{equation}
    \text{PLRV}
    =
    \log
    \frac{
        \exp\left(
            -\frac{||\mathcal{M}(\boldsymbol{x})-f(\boldsymbol{x})||_2^2}{2\sigma^2}
        \right)
    }{
        \exp\left(
            -\frac{||\mathcal{M}(\boldsymbol{x})-f(\boldsymbol{x}')||_2^2}{2\sigma^2}
        \right)
    }
    =
    \frac{
        ||\mathbf{n}+\boldsymbol{v}||_2^2
        -
        ||\mathbf{n}||_2^2
    }{2\sigma^2}
   =
    \frac{\langle \mathbf{n},\boldsymbol{v}\rangle}{\sigma^2}
    +
    \frac{||\boldsymbol{v}||_2^2}{2\sigma^2}.
    \label{eq:v-plrv}
\end{equation}
Since $\langle \mathbf{n},\boldsymbol{v}\rangle
    \sim
    \mathcal{N}\left(
        0,
        \sigma^2||\boldsymbol{v}||_2^2
    \right)$, we have $\text{PLRV}
    \sim
    \mathcal{N}\left(
        \frac{||\boldsymbol{v}||_2^2}{2\sigma^2},
        \frac{||\boldsymbol{v}||_2^2}{\sigma^2}
    \right)$. When taking the worst case over all neighboring datasets and defining the sensitivity of the computation, $   \Delta_2 f
    =
    \sup_{\boldsymbol{x}\simeq\boldsymbol{x}'}
    ||f(\boldsymbol{x})-f(\boldsymbol{x}')||_2=\sup_{\boldsymbol{x}\simeq\boldsymbol{x}'}
    ||\boldsymbol{v}||_2$, 
the induced PLRV is a univariate Gaussian, i.e., 
\begin{equation}\label{eq:plrv_reduction}
\text{PLRV}\sim \mathcal{N}\left(\frac{(\Delta_2f)^2}{2\sigma^2},\frac{(\Delta_2f)^2}{\sigma^2}\right)\text{~\cite{balle2018improving}}. 
\end{equation}

Clearly,  even if $f(\bm{x})$ is high-dimensional, PLRV 
only depends \textit{exclusively} on the inner product $\langle \mathbf{n}, \bm{v} \rangle$, i.e., the component of the noise \textit{parallel} to the sensitivity vector $\bm{v}$. The vast majority of the noise energy, stored in the $M-1$ dimensions \textit{orthogonal} to $\bm{v}$ (i.e., $\mathbf{n}_{\perp}$ visualized in Figure~\ref{fig:geo-dp} (a)), cancels out entirely and vanishes from the privacy analysis. The cancellation of $\mathbf{n}_{\perp}$ is due to the spherically
symmetric property of multivariate Gaussian distribution~\cite{ali1980characterization}.


In this work, we ask a \textbf{geometric question}: what random shapes are generated by
the Gaussian mechanism before the high-dimensional perturbation is reduced
to the scalar privacy loss?     This question  does not indicate any limitation or contradiction in the
classical privacy analysis. In other words, the scalar PLRV is sufficient for characterizing
the pairwise privacy guarantee of the Gaussian mechanism. Nevertheless, it does not retain the full geometry of the
underlying perturbation when obscuring $f(\boldsymbol{x})$ and $f(\boldsymbol{x}')$.  To answer the question, we   shift the perspective from 1D algebraic projections to a geometric landscape, i.e.,   instances of  random triangles  (as   illustrated in Figure~\ref{fig:geo-dp} (a)). Geometrically, for any randomized output $\mathbf{s}\in\mathcal{S}\subset\R^M$, the noise vectors $\mathbf{n}$ and $\mathbf{n}'$ (used to perturb $f(\bm{x})$ and $f(\bm{x}')$) inherently form a closed triangle with the sensitivity vector $\bm{v}  \triangleq f(\bm{x})-f(\bm{x'})$~\cite{ji-r1smg,liu2026privacy} (see details in Section~\ref{sec:noise-and-triangles}).

Different from ~\cite{ji-r1smg,liu2026privacy}, which leverage the geometry of random triangles to design novel perturbation noises for improved utility, we shift the focus back to the foundational   Gaussian mechanism. We propose two   \textbf{geometric views} to  characterize the statistical   properties of the random triangles naturally induced by classic isotropic Gaussian perturbations:

$\bullet$ \textbf{View 1: Macroscopic (Metric) Perspective.} We first study  the ``macro" picture, i.e.,   the overall shapes of the random triangles formed by various instances of a Gaussian mechanism. We map instances of the high-dimensional random triangle onto a 3D probability simplex (i.e., the plane defined by $x+y+z=1$, see Figure~\ref{fig:geo-dp}(c)). By deriving the exact joint Probability Density Function (PDF) on this simplex, we express the classical   PLRV as an explicit scalar
function of the simplex coordinates. We also show that the high-dimensional geometry is governed by two simultaneous
effects. First, the orthogonal noise   restricts the simplex point clouds to a
rotated-ellipse admissible in the \((x,y)\)-coordinates. 
 Second, as the dimension grows, the dominant mass of the PDF is
pushed along this region toward the degenerate noise-dominated configuration
\((x,y,z)=(0,1/2,1/2)\).

$\bullet$ \textbf{View 2: Microscopic (Spectral) Perspective.} Next, we zoom into the internal structure of these triangles. By employing Singular Value Decomposition (SVD), we embed the high-dimensional random triangles onto a 3D hemisphere, see Figure~\ref{fig:geo-dp}(d). We establish the
bijection between the triangle Gram matrix and the hemisphere coordinates,
derive the exact joint density of longitude and latitude, and express the
Gaussian PLRV in these coordinates. The second view provides a complementary description of the
high-dimensional limit. As $M$ increases, the hemisphere distribution
moves toward lower latitudes at rate $M^{-1/2}$ while becoming concentrated
within an increasingly thin band near the equator. 

\textbf{Contributions.} This work, for the first time,  provides a fundamental and comprehensive geometric characterizations of the classical Gaussian mechanism in high dimensions.  
Each provides an exact geometric coordinate system
    in which the mechanism's shape distribution can be   studied.


\section{Related Work}\label{sec:relatedwork}

Random triangles have been studied more broadly within statistical shape
analysis, where geometric configurations are represented after removing
nuisance transformations such as translation, rotation, and scale~\cite{small2012statistical,kendall1989survey}.
Normalized Gram matrices and singular-value coordinates provide
rotation-invariant descriptions of shape, and the shape space of triangles
admits natural spherical or hemispherical representations
\cite{edelman2015random}. 

In differential privacy, the random-triangle view of noise perturbation was introduced in~\cite{ji-r1smg} and subsequently extended in~\cite{liu2026privacy}. Both works leverage measure concentration of random geometric objects to design new perturbation mechanisms, while the latter relaxes the strong independence assumption on the noise imposed in the former.

In this paper, we keep the classical isotropic Gaussian mechanism fixed and
study the probability law of the triangle shapes it induces. Our visualization
and geometric processing of random triangles build on standard techniques
from~\cite{edelman2015random}. The probabilistic models, however, are
different: \cite{edelman2015random} considers triangles whose three vertices
are independently sampled from Gaussian distributions, whereas our triangles
are generated by a single Gaussian noise vector together with a fixed
sensitivity displacement. 
Consequently, the remaining geometric components
are determined by the relation 
    $\mathbf n'=\mathbf n+\boldsymbol{v}$, 
rather than being sampled independently.

\section{Preliminaries}\label{sec:preliminary}


\begin{definition}
\label{def_dp}
A randomized mechanism $\mathcal{M}$ satisfies $(\epsilon,\delta)$-DP if for any two neighboring datasets, $\bm{x},\bm{x}'\in\mathbb{N}^{|\mathcal{X}|}$ that differ by only one data record, 
$\epsilon>0$ and $0< \delta <1$, it satisfies $\Pr[\mathcal{M}(\bm{x})\in \mathcal{S}] \leq e^{\epsilon}\Pr[\mathcal{M}(\bm{x}')\in \mathcal{S}]+\delta$, 
where $\mathcal{S}$ denotes the output set. 
\end{definition}


\begin{theorem}\label{thm:gaussian_privacyguarantee} [Gaussian Mechanism and Privacy  Guarantee~\cite{dwork2014algorithmic}] 
Let $f:\mathbb{N}^{|\mathcal{X}|}\rightarrow \mathbb{R}^M$ be an arbitrary $M$-dimensional function. 
The Gaussian mechanism with parameter $\sigma$ adds random Gaussian noise following $\mathcal{N}(0,\sigma^2)$ to each of the $M$ components of the output, i.e., $\mathbf{s} = f(\bm{x}) + \mathbf{n}$ and $\mathbf{n}\sim\mathcal{N}(\mathbf{0},\sigma^2\mathbf{I})$. 
Let $\epsilon\in(0,1)$  and  $\delta\in(0,1)$ be arbitrary. The   Gaussian Mechanism achieves ($\epsilon,\delta$)-DP, if $\sigma\geq  \frac{c\Delta_2f}{\epsilon}$ and $c^2 > 2\ln(\frac{1.25}{\delta})$.
\end{theorem}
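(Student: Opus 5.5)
The plan is the classical Dwork--Roth argument, organized around the privacy loss random variable in \eqref{eq:v-plrv}. I would first fix neighboring $\bm{x},\bm{x}'$ with $\bm{v}=f(\bm{x})-f(\bm{x}')$ and note that $\|\bm{v}\|_2\le\Delta_2 f$. Because the Gaussian density is spherically symmetric, I would rotate coordinates so that $\bm{v}$ lies along the first axis. The privacy loss at output $\mathbf{s}=f(\bm{x})+\mathbf{n}$ then depends only on the parallel component $n_1=\langle\mathbf{n},\bm{v}\rangle/\|\bm{v}\|_2\sim\mathcal{N}(0,\sigma^2)$. The sign convention only decides which tail is bad; the symmetric case is handled the same way. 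Up to that convention, the loss is
\[
L(\mathbf{s})=\frac{1}{2\sigma^2}\left(2 n_1\|\bm{v}\|_2+\|\bm{v}\|_2^2\right).
\]
The target is $\Pr[|L|>\epsilon]\le\delta$. Writing $\Delta=\Delta_2 f$, this condition is implied by
\[
|n_1|<\frac{\sigma^2\epsilon}{\Delta}-\frac{\Delta}{2},
\]
since the right-hand side is monotone in $\|\bm{v}\|_2$, so the worst case is $\|\bm{v}\|_2=\Delta$.

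The second step is the tail bound. I would set $\sigma=c\Delta/\epsilon$, noting that larger $\sigma$ only helps. The threshold then becomes $t=\sigma\,(c-\epsilon/(2c))$. It suffices to show $\Pr[n_1>t]\le\delta/2$, using the Mills-ratio bound
\[
\Pr[n_1>t]\le\frac{\sigma}{\sqrt{2\pi}\,t}\,e^{-t^2/(2\sigma^2)}.
\]
After taking logarithms, the requirement reduces to
\[
\ln\!\left(c-\frac{\epsilon}{2c}\right)+\frac12\left(c-\frac{\epsilon}{2c}\right)^2\ \ge\ \ln\sqrt{\tfrac{2}{\pi}}+\ln\frac1\delta.
\]
Expanding the square gives $\tfrac12\left(c^2-\epsilon+\epsilon^2/(4c^2)\right)$. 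I would use $\epsilon<1$ to argue that $c\ge 3/2$ makes the left-hand log term at least $\ln\sqrt{2/\pi}$. It then remains to show $c^2-\epsilon\ge 2\ln(1/\delta)$. This follows from $c^2>2\ln(1.25/\delta)$, because $2\ln 1.25>1>\epsilon$; this is exactly where the constant $1.25$ enters. To justify $c\ge 3/2$, observe that $c^2>2\ln 1.25+2\ln(1/\delta)$, and consider the regime of small $\delta$. In that regime $c\ge 3/2$ holds automatically, and for the remaining $\delta$ I would check the inequality directly. I expect this constant bookkeeping to be the main obstacle: the statement allows every $\delta\in(0,1)$, so I would follow Dwork--Roth and treat the admissible range of $\delta$ explicitly. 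If the inequality fails for $\delta$ near $1$, I would note that the theorem is taken as cited from~\cite{dwork2014algorithmic}.

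The last step converts the loss bound into $(\epsilon,\delta)$-DP. I would partition the output space into $R_1=\{\mathbf{s}:|L(\mathbf{s})|\le\epsilon\}$ and its complement $R_2$. For any $\mathcal{S}$, the part of $\mathcal{S}$ in $R_1$ satisfies $\Pr[\mathcal{M}(\bm{x})\in\mathcal{S}\cap R_1]\le e^\epsilon\Pr[\mathcal{M}(\bm{x}')\in\mathcal{S}\cap R_1]$, by integrating the pointwise density ratio bound. The part in $R_2$ has probability under $\mathcal{M}(\bm{x})$ at most $\Pr[\mathcal{M}(\bm{x})\in R_2]\le\delta$. Adding the two pieces gives Definition~\ref{def_dp}.
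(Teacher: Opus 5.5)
The paper gives no proof of this theorem (it is quoted from Dwork--Roth), and your outline is their argument. The reduction to $n_1\sim\mathcal{N}(0,\sigma^2)$, the threshold $t=\sigma(c-\epsilon/(2c))$, the Mills-ratio inequality and the final partition into $R_1,R_2$ are all fine. The genuine gap is in the constants, and it starts with an arithmetic error: $2\ln 1.25=\ln 1.5625\approx 0.446$, which is \emph{smaller} than $1$. So $c^2>2\ln(1.25/\delta)$ only gives $c^2-\epsilon>2\ln(1/\delta)+0.446-\epsilon$. Your claimed consequence $c^2-\epsilon\ge 2\ln(1/\delta)$ therefore fails for every $\epsilon\in(0.446,1)$. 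The real issue is how you allocate slack. You use $\ln(c-\epsilon/(2c))$ only to dominate $\ln\sqrt{2/\pi}$, but $\ln\sqrt{2/\pi}=-\tfrac12\ln(\pi/2)$ is negative, sits on the right-hand side, and is precisely what pays for $\epsilon$. The correct accounting: for $c\ge 3/2$ and $\epsilon<1$ you have $c-\epsilon/(2c)\ge 7/6$, so the log term is at least $\ln(7/6)>0$. Keep $\ln\sqrt{2/\pi}$ on the right and insert $c^2>2\ln(1/\delta)+2\ln 1.25$. It then suffices that $2\ln 1.25+\ln(\pi/2)+2\ln(7/6)\ge\epsilon$, and the left side is about $0.446+0.452+0.308\approx 1.21>1$. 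The positive log term cannot be dropped altogether: without it you would need $0.898+\epsilon^2/(4c^2)\ge\epsilon$, which fails for large $c$ and $\epsilon$ near $1$. This is also why the textbook lower bound $c^2-\epsilon+\epsilon^2/(4c^2)\ge c^2-8/9$ is valid only at $c=3/2$.

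The second gap is the regime you postpone, and it cannot be settled by ``checking directly'' within your framework. The hypothesis forces $c\ge 3/2$ only when $\delta\le 1.25e^{-9/8}\approx 0.41$; for larger $\delta$, $c$ can be as small as $\sqrt{2\ln 1.25}\approx 0.67$. In that regime the two-sided Mills-ratio requirement is simply false. For example, $c=1$, $\epsilon=0.99$, $\delta=0.76$ satisfy the hypothesis, yet $2\cdot\frac{\sigma}{\sqrt{2\pi}\,t}e^{-t^2/(2\sigma^2)}\approx 1.39>\delta$. Once $c^2\le\epsilon/2$, the threshold $t$ is nonpositive, so $\Pr[|n_1|\ge t]=1$. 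Deferring to the citation is not a proof.

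The fix is to observe that your partition step only uses $L\le\epsilon$ on $R_1$, so it suffices to bound the upper tail. After the same worst-case reductions in $\|\bm{v}\|_2$ and $\sigma$, $\Pr[L>\epsilon]\le\Pr[Z>c-\epsilon/(2c)]\le\Pr[Z>c-1/(2c)]$ with $Z\sim\mathcal{N}(0,1)$. For $u\ge 0$ one has $\Pr[Z>u]\le\tfrac12 e^{-u^2/2}$ (substitute $x=u+y$ and drop $e^{-uy}\le 1$). Combined with $(c-\tfrac{1}{2c})^2\ge c^2-1$, this gives $\Pr[L>\epsilon]\le\tfrac{\sqrt{e}}{2}e^{-c^2/2}<1.25\,e^{-c^2/2}<\delta$ for all $c\ge 1/\sqrt{2}$. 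The case $c<1/\sqrt{2}$ occurs only when $\delta>1.25e^{-1/4}\approx 0.97$, where $\Pr[L>\epsilon]\le\Pr[Z>-0.09]<0.54$ trivially. This one-sided estimate covers every $\delta\in(0,1)$ at once and makes the Mills-ratio bookkeeping unnecessary.
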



\begin{definition}[
PLRV~\cite{bun2016concentrated}]\label{def:plrv}
Let  $\bm{x}, \bm{x}' \in \mathbb{N}^{|\mathcal{X}|}$ be two neighboring datasets, $\mathcal{M}: \mathbb{N}^{|\mathcal{X}|} \to \mathcal{S}$ be a randomized mechanism, and   $P$ and $Q$ be the   distributions of $\mathcal{M}(\bm{x})$ and $\mathcal{M}(\bm{x}')$, respectively.   PLRV associated with   $\mathcal{M}$ and a randomized output $\mathbf{s}$ is defined as $\PLRV_{\mathcal{M}}(s) = \ln \left( \frac{\mathrm{d} P(\mathcal{M}(\bm{x})=\mathbf{s})}{\mathrm{d} Q(\mathcal{M}(\bm{x}')=\mathbf{s})} \right)$, 
where $\frac{\mathrm{d} P(\mathcal{M}(\bm{x})=\mathbf{s})}{\mathrm{d} Q(\mathcal{M}(\bm{x}')=\mathbf{s})}$ denotes the Radon-Nikodym derivative of $P$ w.r.t. $Q$ evaluated at $s \in \mathcal{S}$.
\end{definition}

\section{The Proposed Geometric Views}\label{sec:main-tech}

In this section, we first briefly review recent developments linking DP to random triangles~\cite{ji-r1smg,liu2026privacy}. 
Next, we present the proposed geometric views, which provide a novel interpretation of DP. 

\subsection{Noise Perturbation and   Random Triangles}
\label{sec:noise-and-triangles}



 
 To ensure 
 DP when releasing a  query $f(\bm{x})\in\mathbb{R}^M$ on a sensitive dataset $\bm{x}$, noise perturbation adds calibrated noise $\mathbf{n}\in\mathbb{R}^M$ (typically multivariate Gaussian) to hide the difference vector $\bm{v} \triangleq 
 f(\bm{x})-f(\bm{x}')$
where $\bm{x}'$ is any neighboring dataset. 
Two recent studies \cite{ji-r1smg,liu2026privacy} have shown that every differentially private mechanism based on noise perturbation implicitly forms a random triangle. To be more specific,   let  $\bm{x}, \bm{x}' \in \mathbb{N}^{|\mathcal{X}|}$ be a pair of neighboring datasets,   $f(\cdot)\in\R^M$ is a query or computation, and   $\mathbf{n}$ and $\mathbf{n}'$ are the noise vectors    used to obscure $f(\bm{x})$ and $f(\bm{x}')$, respectively.   Then, for  any noise perturbation mechanism, $\mathcal{M}$, that produces a random outcome  $\mathbf{s}\in\mathcal{S}\subset\R^M$, 
we must have the following
\begin{equation}\label{eq:triangle-eq}
    \begin{aligned}
        \mathcal{M}(\bm{x}) = \mathcal{M}(\bm{x}') =\mathbf{s}\in\mathcal{S}  
        \Leftrightarrow  &f(\bm{x})+\mathbf{n}= f(\bm{x}')+\mathbf{n}'=\mathbf{s}\in\mathcal{S}\\
        \Leftrightarrow &\ \mathbf{n}' - \mathbf{n} = f(\bm{x})-f(\bm{x}') = \bm{v}.
    \end{aligned}
\end{equation}
Clearly, (\ref{eq:triangle-eq}) implies that $\bm{v} \triangleq f(\bm{x})-f(\bm{x}')$, $\mathbf{n}$ and  $\mathbf{n}'$  all belong to a specific hyperplane. 
In particular, $\bm{v}$, $\mathbf{n}$, and $\mathbf{n}'$ must   form a   \textbf{random} triangle, i.e., $\mathbf{n}' - \mathbf{n} = \bm{v}$, as shown in  Figure~\ref{fig:geo-dp}(a).

In this work, we focus the discussion on Gaussian perturbation noise.   By considering varying values of the privacy parameters, $\epsilon$ and $\delta$, and different dimension of $f(\bm{x})\in\R^M$, we   show 300 random triangles generated from 300 independent Gaussian mechanism instances   in Figure~\ref{fig:geo-dp}(b), where the blue edges denote the difference vectors $\bm{v}$, and the red edges represent $\mathbf{n}$ and $\mathbf{n}'$. Without loss of generality, we consider $l_2$ sensitivity $\Delta_2f = 1$, i.e., the worst case scenario is $\sup_{\boldsymbol{x}\sim\boldsymbol{x}'}\|\bm{v}\|_2=1$. 
To fit all 300 triangles  in one plot, we have shifted the triangles and centered them at different coordinates on the canvas. Thus, only the shape matters, not their positions.







Specifically,  Figure~\ref{fig:geo-dp}(b) illustrates how the shape of  induced random triangles evolves as privacy parameters ($\epsilon$ and $\delta$) and dimensionality ($M$) vary. Along the  horizontal axis, as privacy guarantee increases (i.e.,   $\epsilon$ and $\delta$ decrease),   the triangles become narrow and more concentrated. 
Similar pattern is also observed along the vertical axis, as dimensionality $M$ increases,   the triangles also become more elongated.  This visual pattern highlights the key effects: both higher privacy and higher dimension distort  the triangles and flatten  them,   making the geometric structure increasingly dominated by the perturbation noise. 

In what follows, we introduce two geometric views that (i) provide insight into the privacy guarantees, and (ii) characterize how the random triangles behave under different privacy parameters and ambient dimensions.

\subsection{Embedding Random Triangles into $x+y+z=1$ Plane}\label{sec:task1.1}

 The first geometric tool maps each random triangle (e.g., in Figure~\ref{fig:geo-dp}(b)) to a normalized tuple 
$(x,y,z)$, i.e., a point lying on the plane of 
$x+y+z=1$.  This planar   tool     provides an intuitive way to understand the association between privacy, utility, and dimensionality of the problem.

According to  Theorem~\ref{thm:gaussian_privacyguarantee},    $(\epsilon,\delta)$-DP is achieved if one releases the sanitized result $f(\bm{x})+\mathbf{n}\in\R^M$, where the noise vector $\mathbf{n}$ is attributed to a multivariate Gaussian distribution $\mathcal{N}(\mathbf{0},\sigma^2\bm{I})$ with $\sigma^2=\frac{2 \log(1.25/\delta)(\Delta_2f)^2}{\epsilon^2}$, and $\Delta_2f$ is the    $l_2$ sensitivity of a computation   $f(\cdot)$, i.e., $\Delta_2f \triangleq \sup_{\bm{x}\sim\bm{x}'}\|f(\bm{x})-f(\bm{x}')\|_2$. 
Clearly, $\sigma^2$ increases as $\Delta_2f$ or $\frac{1}{\epsilon}$ increase. In other words, the magnitude of the perturbation noise, $\|\mathbf{n}\|_2$, is expected to increase when dealing with high sensitivity values (large $\Delta_2f$) or strict privacy requirement (small $\epsilon$). Additionally, the expected magnitude of the additive noise, i.e., $\sigma^2M$, also increases as the dimensionality of the problem, $M$. 

Hence, we propose to study  the association between $\Delta_2f$, $\epsilon$, and $M$ by investigating the \textbf{normalized random triangle} $\widetilde{\triangle}$ defined as   
\begin{equation}\label{eq:normalized_triangle}
    \widetilde{\triangle} \triangleq [\bm{e}_1\ \  \bm{e}_2\ \  \bm{e}_3 ]=\frac{[\bm{v}\ \  \mathbf{n} \ \  (-\bm{v}-\mathbf{n})]}{ \|\bm{v}\ \  \mathbf{n} \ \  (-\bm{v}-\mathbf{n})\|_F},
\end{equation}
where $\|\cdot\|_F$ stands for the Frobenius norm. (\ref{eq:normalized_triangle}) represents the edge vectors, i.e., $\bm{e}_1$, $\bm{e}_2$, and $\bm{e}_3$, of the normalized triangle, whose columns adds to zeros column $\mathbf{0}\in\R^M$. Due to the normalization, we have $\|\bm{e}_1\|_2^2+\|\bm{e}_2\|_2^2+\|\bm{e}_3\|_2^2=1$, and we can treat the tuple 
\begin{equation}
    \label{eq:tuple}
    \left(\|\bm{e}_1\|_2^2, \|\bm{e}_2\|_2^2, \|\bm{e}_3\|_2^2\right)\triangleq (x,y,z)
\end{equation}
as a random point on   $x+y+z=1$ plane, where $x>0, y>0, z>0$.

\begin{table*}[!htbp]
    \centering
    \begin{tabular}{c|c|c|c}
     & $M=3$ & $M=9$ & $M=15$ \\ \hline

 $\epsilon=2$ &
        \adjustbox{trim=20pt 10pt 18pt 10pt,clip}{\includegraphics[width=0.35\textwidth]{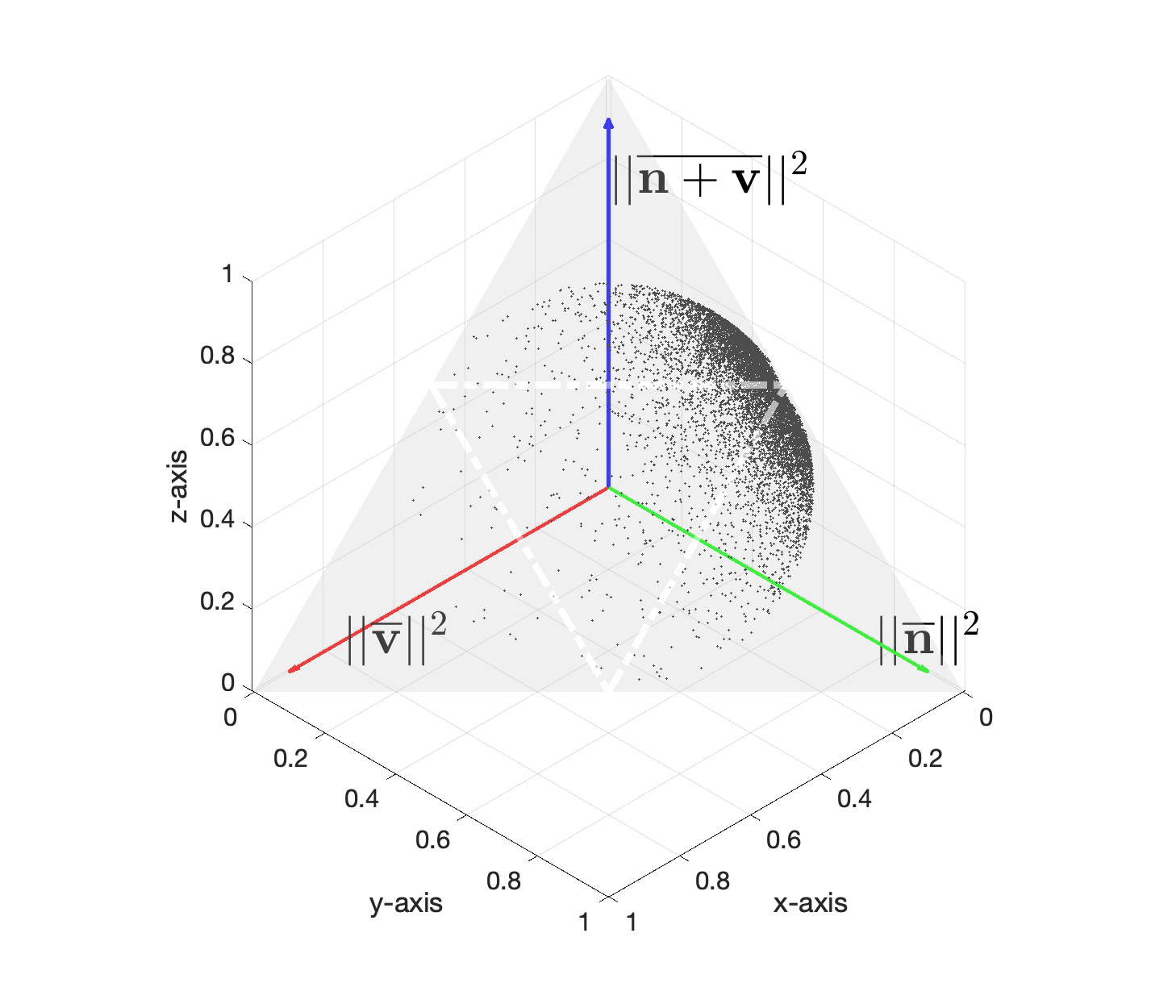}} &
        \adjustbox{trim=20pt 10pt 18pt 10pt,clip}{\includegraphics[width=0.35\textwidth]{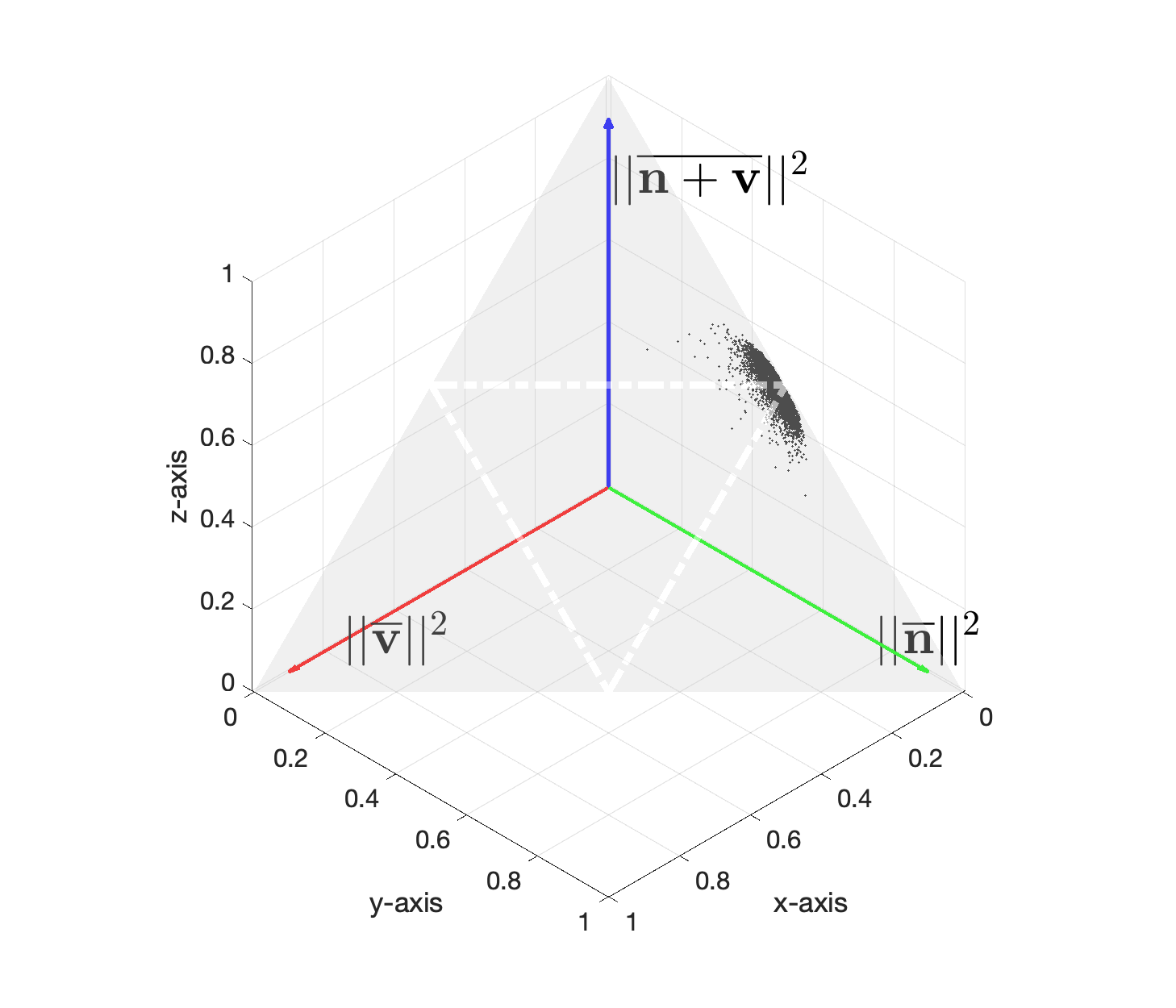}} &
        \adjustbox{trim=20pt 10pt 18pt 10pt,clip}{\includegraphics[width=0.35\textwidth]{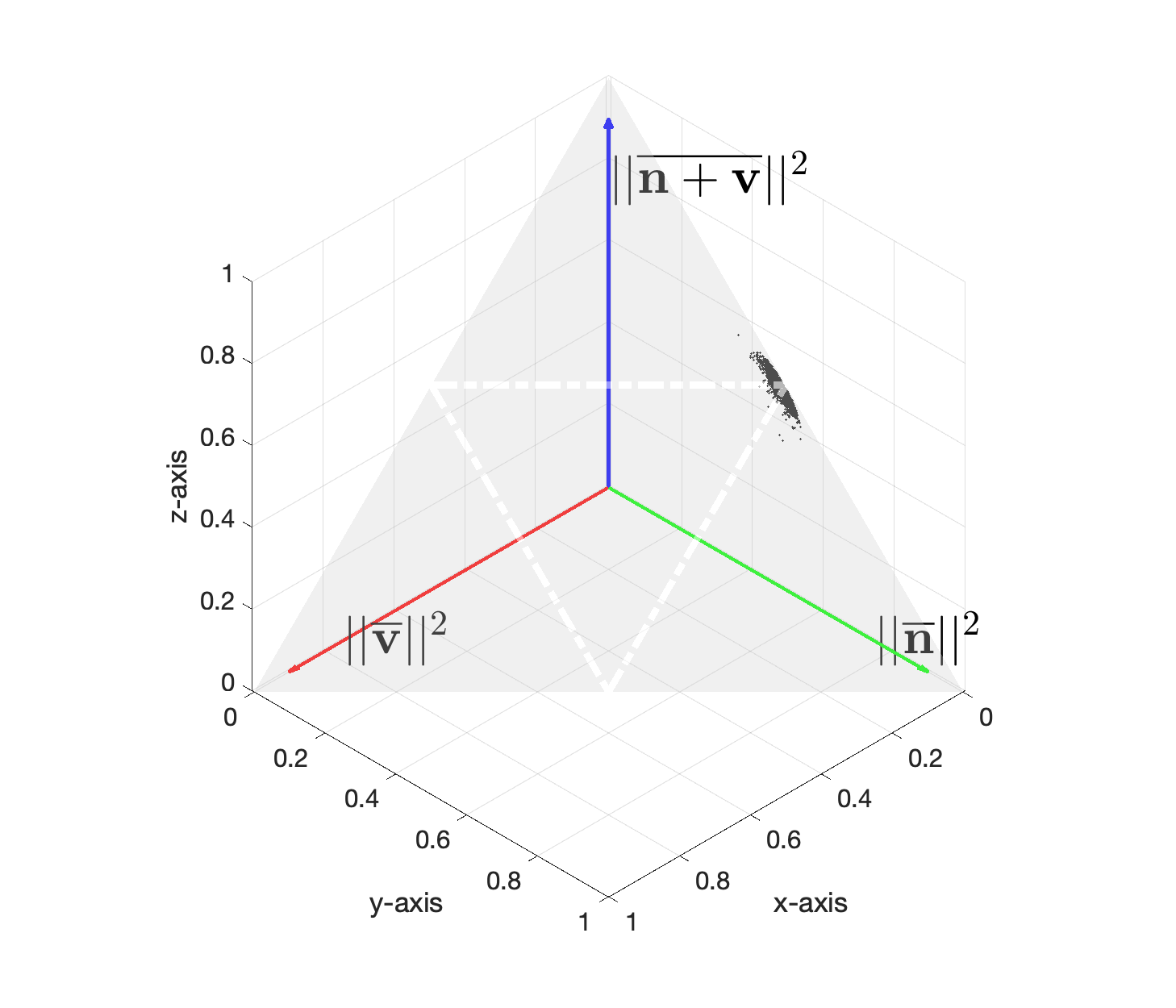}} \\\hline

    $\epsilon=6$ &
        \adjustbox{trim=20pt 10pt 18pt 10pt,clip}{\includegraphics[width=0.35\textwidth]{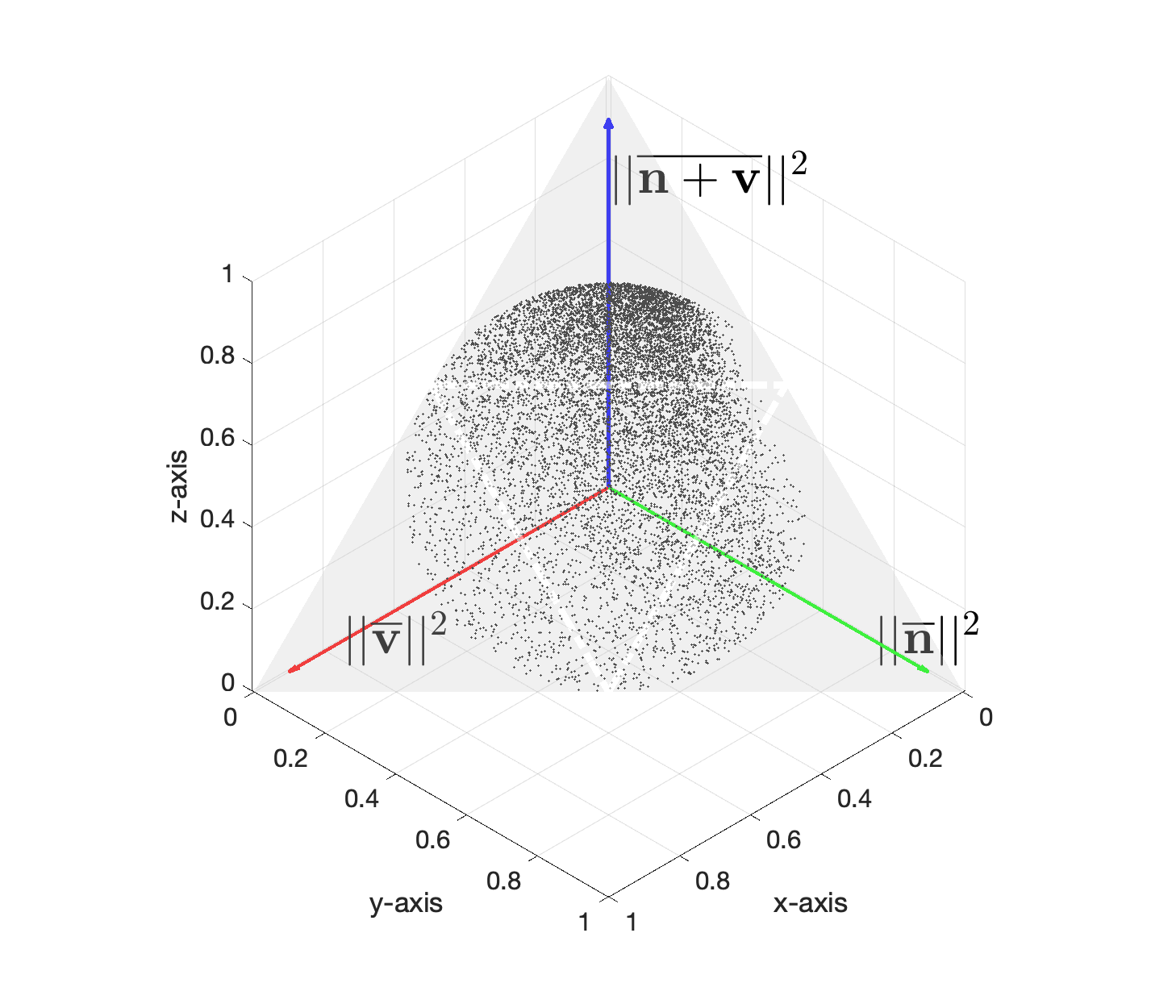}} &
        \adjustbox{trim=20pt 10pt 18pt 10pt,clip}{\includegraphics[width=0.35\textwidth]{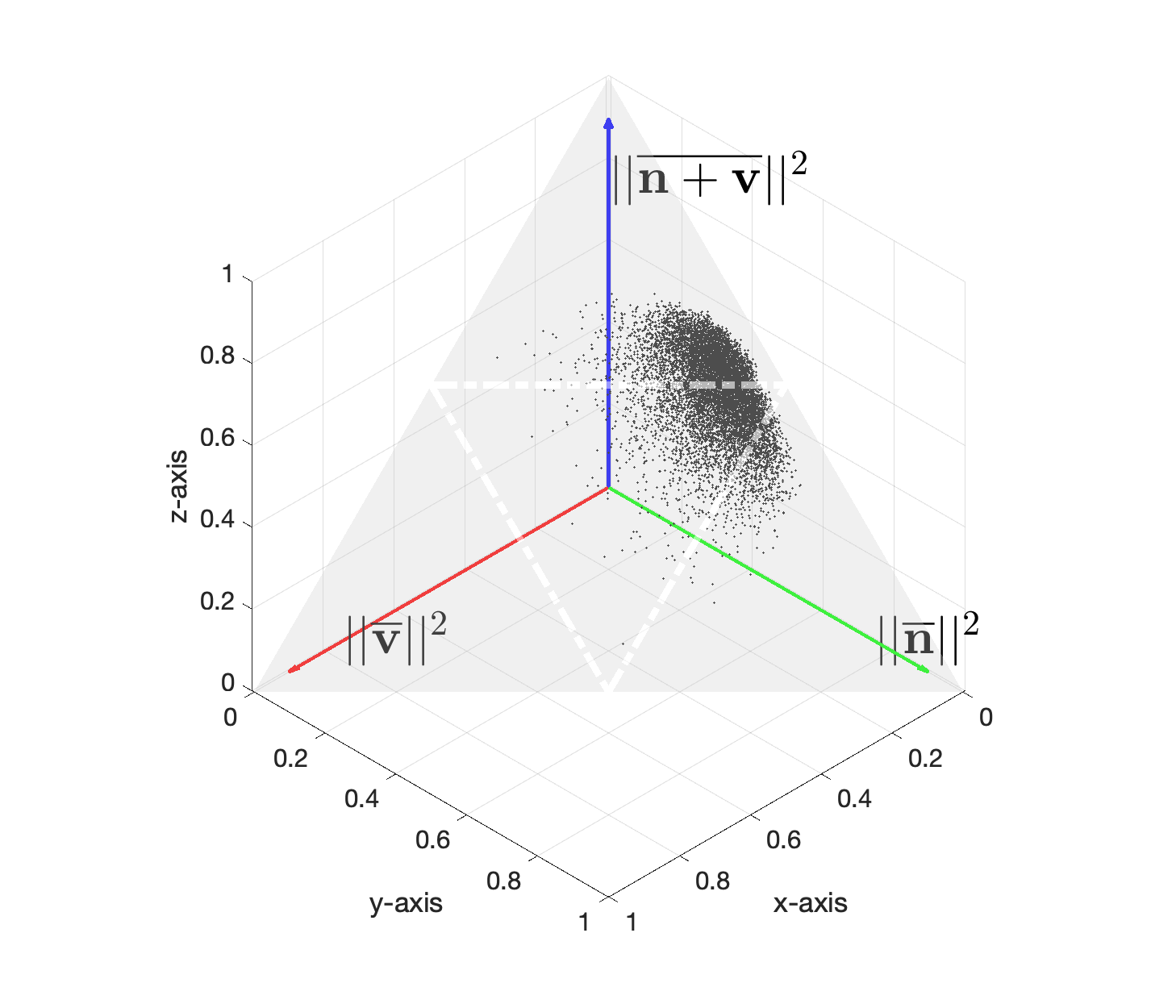}} &
        \adjustbox{trim=20pt 10pt 18pt 10pt,clip}{\includegraphics[width=0.35\textwidth]{images/xyz-plane-view/plane-eps2-d15.eps}} \\\hline

 $\epsilon=10$ &
        \adjustbox{trim=20pt 10pt 18pt 10pt,clip}{\includegraphics[width=0.35\textwidth]{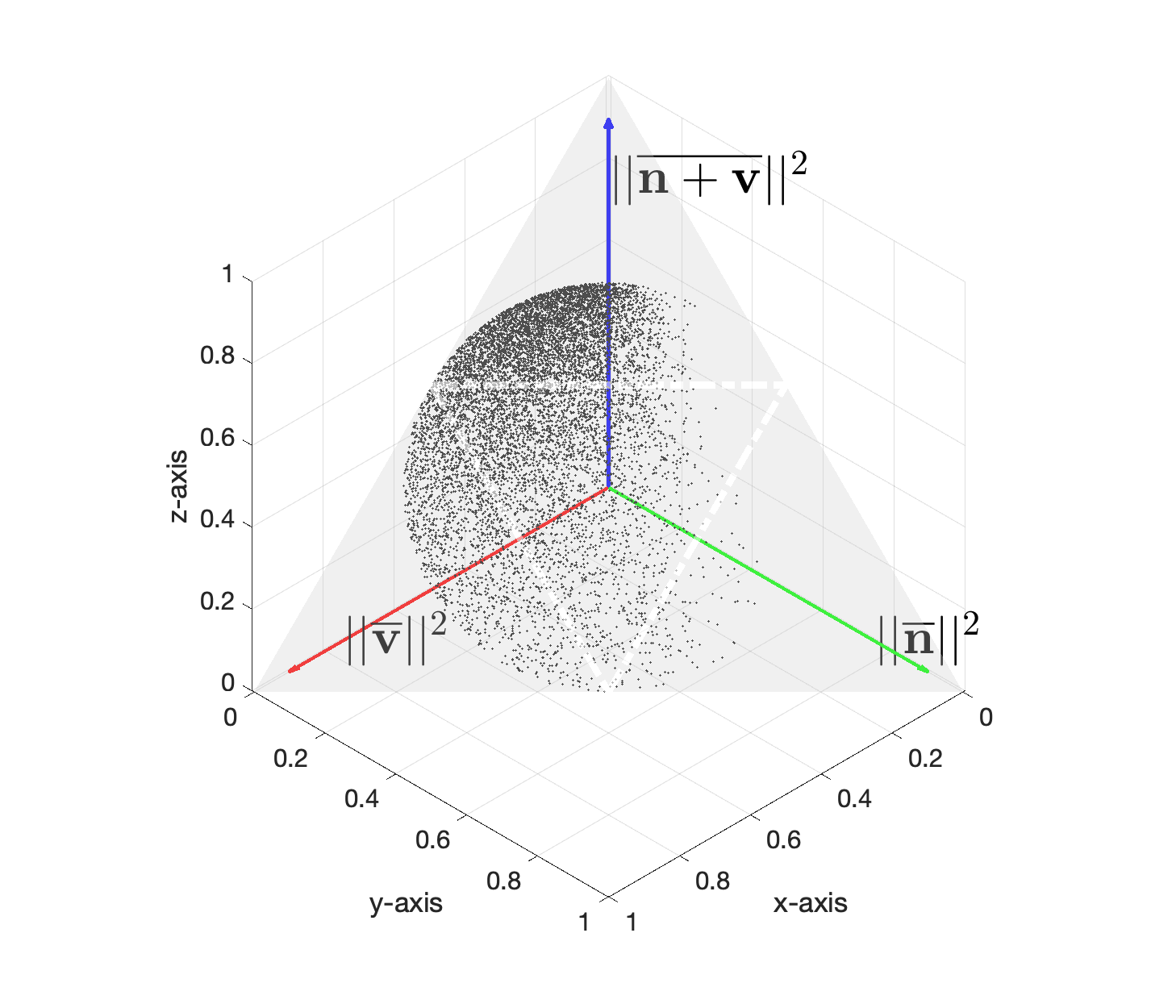}} &
        \adjustbox{trim=20pt 10pt 18pt 10pt,clip}{\includegraphics[width=0.35\textwidth]{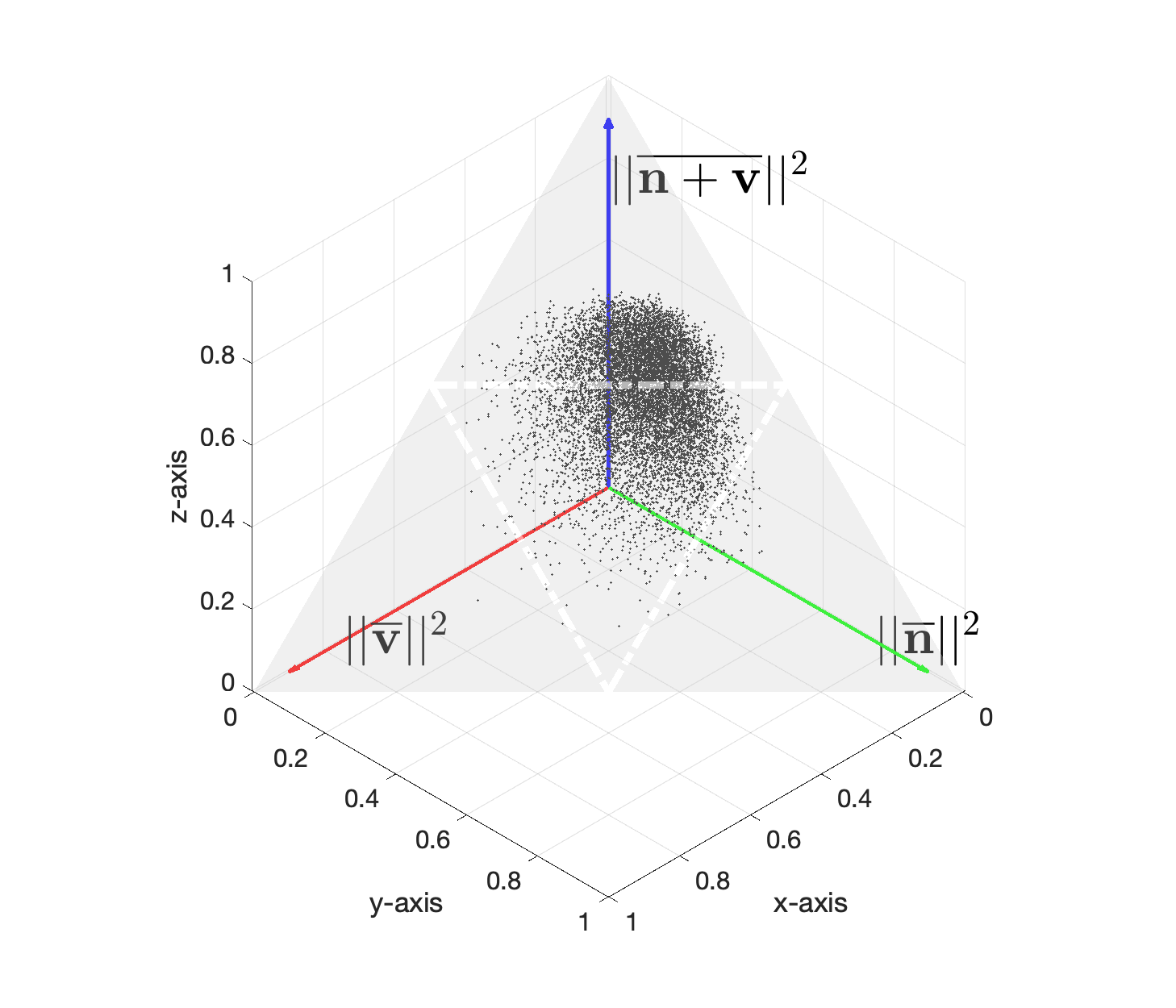}} &
        \adjustbox{trim=20pt 10pt 18pt 10pt,clip}{\includegraphics[width=0.35\textwidth]{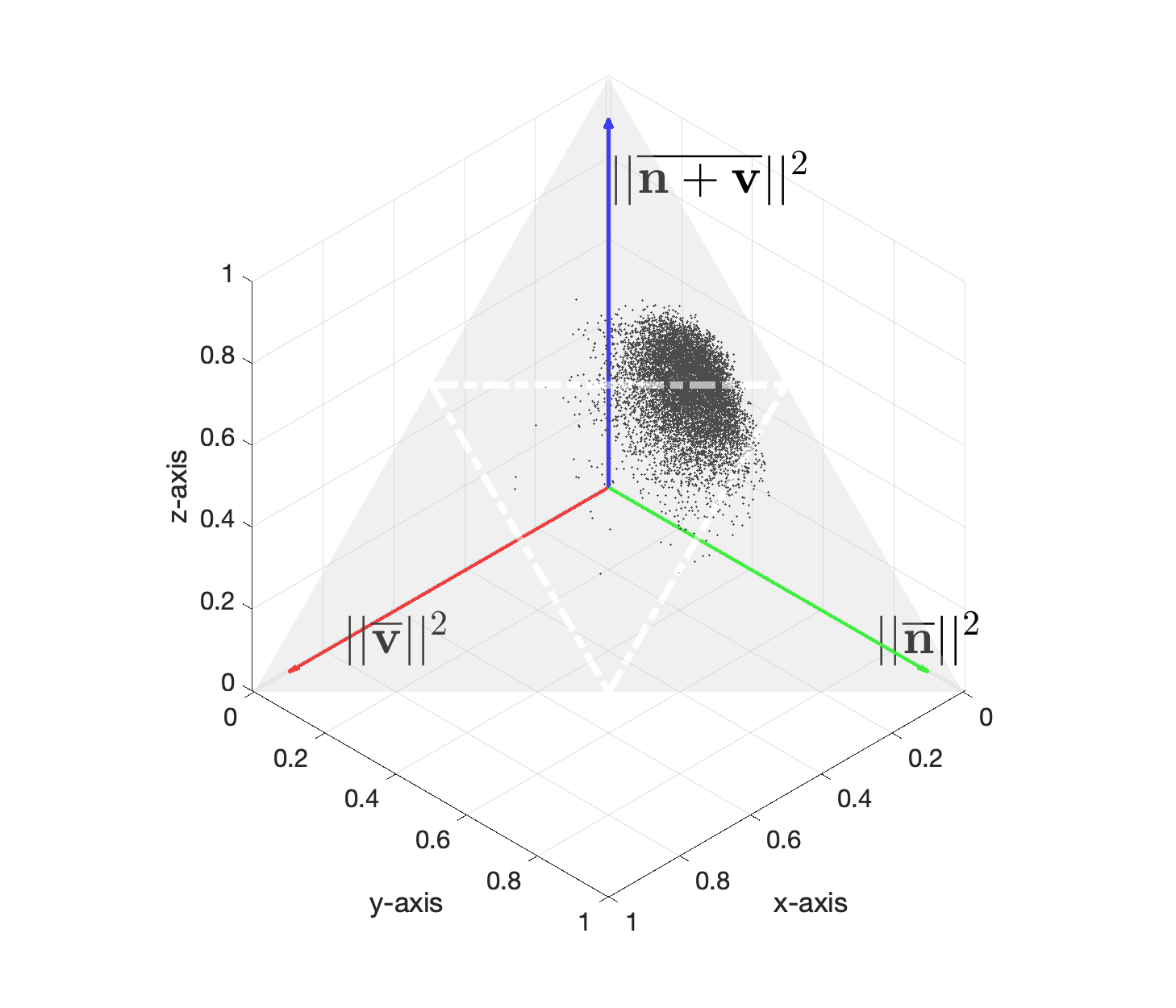}} \\
    \end{tabular}
\caption{10,000 point clouds of $\left(\|\bm{e}_1\|_2^2, \|\bm{e}_2\|_2^2, \|\bm{e}_3\|_2^2\right)$  on   plane          $x+y+z=1$, when    $M\in\{3,9,15\}$, $\epsilon\in\{2,10\}$,   $\delta = 10^{-5}$, and $\Delta_2f=1$}
    \label{tab:xyz-plane}
\end{table*}

In Table~\ref{tab:xyz-plane}, by considering        dimension  $M\in\{3,9,15\}$, $\epsilon\in\{2,10\}$,   $\delta = 10^{-5}$, and $\Delta_2f=1$, we scatter  the point clouds generated by 10,000 random tuples  $\left(\|\bm{e}_1\|_2^2, \|\bm{e}_2\|_2^2, \|\bm{e}_3\|_2^2\right)$  for a given combination of $M$ and $\epsilon$. Each tuple is represented by a point on the $x+y+z=1$ plane, in which the normalized edge length squared  $\|\overline{\bm{v}}\|_2^2$, $\|\overline{\mathbf{n}}\|_2^2$, and $\|\overline{\bm{v}+\mathbf{n}}\|_2^2$ determine the coordinates on the  $x$-, $y$-, and $z$-axes, respectively. One each of the $x+y+z=1$ plane, the white dotted lines represent regimes of   right triangles, for instance, $\|\bm{e}_1\|_2^2+\|\bm{e}_2\|_2^2=\frac{1}{2}=\|\bm{e}_3\|_2^2$. Note that normalization in (\ref{eq:normalized_triangle}) does not discard any information relevant to the utility–privacy trade-off.  Prior work shows that utility loss depends exclusively on the noise magnitude $\|\mathbf{n}\|_2$ \cite{ji-r1smg,hardt2010geometry,nikolov2013geometry}.  Because the normalization rescales  each    edge in one triangle  by the same Frobenius norm, it leaves the key ratio between $\|\mathbf{n}\|_2$ and $\Delta_2f$ unchanged, and therefore preserves every quantity that matters for both privacy analysis and utility accounting.

The following theorem (proved in Appendix~\ref{app:simplex-proof}) characterizes the behavior of the point clouds on the simplex. Because of the normalization in (\ref{eq:normalized_triangle}), we always have $x+y+z=1$, and the triple $(x,y,z)$ is esentially constrained to reside on a 2-dimensional simplex, which means that the distribution of $(x,y,z)$ is not a full 3D density, but a density on a 2D surface. Hence, we can characterize the randomness of the triple  with a joint PDF of two coordinates, $(x,y)$, and the third coordinate is determined via $z = 1-x-y$.

\begin{theorem}\label{thm:simplex-density} Let $M$ be the dimension of the sensitive query $f(\bm{x})\in\R^M$, $\Delta_2f$ is the $l_2$ sensitivity of the query, and i.i.d. Gaussian noise with variance $\sigma^2$ is used to obscure $f(\bm{x})$. Define the normalized triangle edges $(x,y,z)$   as in \eqref{eq:normalized_triangle}
and 
\eqref{eq:tuple}, and let  $p(x,y)$ and $q(x,y)$ be defined as 
    \begin{equation}\label{eq:p-q}
\left\{
\begin{aligned}
a \triangleq p(x,y) = & \frac{\Delta_2f (1-2x-2y)}{2x}\\
r^2 \triangleq q(x,y) = & \frac{\left(\Delta_2f\right)^2 y}{x}-\big(p(x,y)\big)^2
\end{aligned},
\right.
\end{equation}
then the joint distribution of $(x,y)$ is
\begin{equation}\label{eq:jointpdf}
\begin{aligned}
f_{X,Y}(x,y)
&=\frac{(\Delta_2 f)^3}{2x^3}
\cdot
\frac{1}{\sqrt{2\pi}\,\sigma}
\exp\!\left[-\frac{1}{2\sigma^2}p(x,y)^2\right]
\cdot 
\frac{\big(q(x,y)\big)^{\frac{M-3}{2}}}
{2^{\frac{M-1}{2}}
\sigma^{M-1}
\Gamma\!\left(\frac{M-1}{2}\right)}
\exp\!\left[-\frac{1}{2\sigma^2}q(x,y)\right] \\
\end{aligned},
\end{equation}
where $0<x<1$, $0<y<1$, $x+y<1$, $q(x,y)>0$, and $z$ is uniquely obtained via $z=1-x-y$.
\end{theorem}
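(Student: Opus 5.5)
The plan is a change of variables from the natural Gaussian coordinates of $\mathbf{n}$ to the simplex coordinates $(x,y)$. Take the worst case $\|\bm{v}\|_2=\Delta_2 f\triangleq\Delta$ and let $\bm{u}=\bm{v}/\Delta$.

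\textbf{Step 1: decompose the noise.} Split $\mathbf{n}$ into its parallel part $a'\triangleq\langle\mathbf{n},\bm{u}\rangle$ and its orthogonal energy $R\triangleq\|\mathbf{n}_\perp\|_2^2$. By the spherical symmetry of $\mathcal{N}(\mathbf{0},\sigma^2\mathbf{I}_M)$ (rotate $\bm{u}$ to the first basis vector), $a'$ and $R$ are independent. Moreover $a'\sim\mathcal{N}(0,\sigma^2)$ and $R/\sigma^2\sim\chi^2_{M-1}$. Their joint density is therefore
\[
\frac{1}{\sqrt{2\pi}\sigma}e^{-a'^2/(2\sigma^2)}\cdot\frac{R^{\frac{M-3}{2}}e^{-R/(2\sigma^2)}}{2^{\frac{M-1}{2}}\sigma^{M-1}\Gamma\!\left(\frac{M-1}{2}\right)}
\]
on $\mathbb{R}\times(0,\infty)$, assuming $M\ge 2$. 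These two factors are exactly the two factors appearing in \eqref{eq:jointpdf}.

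\textbf{Step 2: express the edge lengths and identify $p,q$.} In these coordinates $\|\mathbf{n}\|_2^2=a'^2+R$ and $\|\bm{v}+\mathbf{n}\|_2^2=\Delta^2+2\Delta a'+a'^2+R$. Write $S=\|[\bm{v}\ \mathbf{n}\ (-\bm{v}-\mathbf{n})]\|_F^2$ for the sum of the three squared lengths. Then $x=\Delta^2/S$, $y=\|\mathbf{n}\|_2^2/S$ and $z=\|\bm{v}+\mathbf{n}\|_2^2/S$.

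Since $1-2x-2y=(\|\bm{v}+\mathbf{n}\|^2-\Delta^2-\|\mathbf{n}\|^2)/S=2\Delta a'/S$ and $Sx=\Delta^2$, we get $p(x,y)=a'$. Likewise $\Delta^2y/x=\|\mathbf{n}\|_2^2$, so $q(x,y)=R$.

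Hence the map $(a',R)\mapsto(x,y)$ is a bijection from $\mathbb{R}\times(0,\infty)$ onto $\{x,y>0,\ x+y<1,\ q(x,y)>0\}$, and $(p,q)$ is its explicit inverse. Positivity of $z$ is automatic because $z=1-x-y$ is a normalized squared length. The constraint $q>0$ is precisely $R>0$, which is where the elliptical support comes from.

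\textbf{Step 3: the Jacobian (main obstacle).} Apply the transformation formula $f_{X,Y}(x,y)=f_{a',R}(p,q)\,\bigl|\partial(p,q)/\partial(x,y)\bigr|$. The partial derivatives of $p$ are
\[
\partial_x p=-\frac{\Delta(1-2y)}{2x^2},\qquad \partial_y p=-\frac{\Delta}{x}.
\]
For $q=\Delta^2y/x-p^2$, the terms $-2p\,\partial p$ in $\partial_x q$ and $\partial_y q$ cancel in the determinant. This cancellation is the key simplification. The determinant reduces to
\[
\partial_x p\cdot\frac{\Delta^2}{x}+\partial_y p\cdot\frac{\Delta^2 y}{x^2}=-\frac{\Delta^3(1-2y)}{2x^3}-\frac{\Delta^3 y}{x^3}=-\frac{\Delta^3}{2x^3}.
\]
Its absolute value $\Delta^3/(2x^3)$ gives the prefactor in \eqref{eq:jointpdf}, which completes the derivation. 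The care needed in this step is verifying the cancellation and checking that the map is globally injective, which holds because the inverse is given explicitly.
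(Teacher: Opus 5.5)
Your proposal is correct and follows the paper's proof essentially step for step. You split $\mathbf{n}$ into an independent parallel $\mathcal{N}(0,\sigma^2)$ component and an orthogonal $\sigma^2\chi^2_{M-1}$ energy, recognize $(p,q)$ as the explicit inverse map, and change variables with Jacobian $(\Delta_2 f)^3/(2x^3)$. Beyond the paper, you also write out the $-2p\,\partial p$ cancellation in the determinant (which the paper leaves as ``easy to check''), verify global injectivity, and state the worst-case assumption $\|\bm{v}\|_2=\Delta_2 f$ that the paper uses implicitly.
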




Subsequently, we have the following corollary (proved in Appendix~\ref{proof-simplex-plrv}) that connects the traditional PLRV in \eqref{eq:plrv_reduction} to the PLRV supported on the simplex. 

\begin{corollary}\label{corollary:new-plrv} The PLRV associated with the classic Gaussian mechanism in (\ref{eq:plrv_reduction}) can be reconstructed by a random tuple  $(x,y)$ generated from 
      (\ref{eq:jointpdf}) via 
  \begin{equation}\label{eq:plrv-simplex}
    \mathrm{PLRV}   \triangleq   L (x,y) = \frac{\left(\Delta_2f\right)^2}{2\sigma^2} \cdot \frac{x+2y-1}{x}.
  \end{equation}    
\end{corollary}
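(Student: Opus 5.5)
The plan is to avoid the joint density \eqref{eq:jointpdf} entirely and work pathwise. For every realization of $\mathbf{n}$ I will write the scalar privacy loss in \eqref{eq:v-plrv} as a deterministic function of the simplex coordinates $(x,y)$. Once that pointwise identity holds, any tuple $(x,y)$ drawn from Theorem~\ref{thm:simplex-density} reproduces the PLRV through $L(x,y)$, both almost surely and in law. As in the setup preceding Theorem~\ref{thm:simplex-density}, I take the worst-case pair, so $\|\bm{v}\|_2=\Delta_2f$.

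\textbf{Step 1: invert the normalization.} Let $S=\|\bm{v}\|_2^2+\|\mathbf{n}\|_2^2+\|\bm{v}+\mathbf{n}\|_2^2$ be the squared Frobenius norm in \eqref{eq:normalized_triangle}. Then $x=(\Delta_2f)^2/S$, so $S=(\Delta_2f)^2/x$. It follows that $\|\mathbf{n}\|_2^2=yS=(\Delta_2f)^2y/x$ and $\|\bm{v}+\mathbf{n}\|_2^2=zS=(\Delta_2f)^2(1-x-y)/x$.

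\textbf{Step 2: recover the parallel noise component by the law of cosines.} Expand $\|\bm{v}+\mathbf{n}\|_2^2=\|\bm{v}\|_2^2+\|\mathbf{n}\|_2^2+2\langle\mathbf{n},\bm{v}\rangle$. Solving with Step 1 gives
\[
\langle \mathbf{n},\bm{v}\rangle=\frac{(\Delta_2f)^2(1-2x-2y)}{2x}=\Delta_2f\cdot p(x,y).
\]
Hence $p(x,y)=a$ is exactly the signed component of $\mathbf{n}$ along $\bm{v}/\|\bm{v}\|_2$. This matches Theorem~\ref{thm:simplex-density}, whose factor $\exp[-p^2/(2\sigma^2)]/(\sqrt{2\pi}\sigma)$ is the $\mathcal{N}(0,\sigma^2)$ law of that component. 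As a consistency check, $q=\|\mathbf{n}\|_2^2-a^2=\|\mathbf{n}_\perp\|_2^2$ carries the $\chi^2_{M-1}$ factor. This check also shows the reconstruction uses only $(x,y)$ and discards $\mathbf{n}_\perp$, in line with the discussion after \eqref{eq:v-plrv}.

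\textbf{Step 3: substitute into the scalar loss.} Inserting $a=p(x,y)$ into \eqref{eq:v-plrv} gives
\[
\frac{\langle\mathbf{n},\bm{v}\rangle}{\sigma^2}+\frac{(\Delta_2f)^2}{2\sigma^2}=\frac{(\Delta_2f)^2}{2\sigma^2}\cdot\frac{(1-2x-2y)+x}{x}.
\]
Separately, the formula as stated expands to
\[
\frac{(\Delta_2f)^2}{2\sigma^2}\cdot\frac{x+2y-1}{x}=-\frac{\Delta_2f\,a}{\sigma^2}-\frac{(\Delta_2f)^2}{2\sigma^2}.
\]
The two expressions differ by an overall sign.

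\textbf{Main obstacle: the sign convention.} Carried out with \eqref{eq:v-plrv} literally, the computation gives $\frac{(\Delta_2f)^2}{2\sigma^2}\cdot\frac{1-x-2y}{x}$, which is the negative of \eqref{eq:plrv-simplex}. The stated expression is $\ln(\mathrm{d}Q/\mathrm{d}P)$ evaluated on the same triangle. Note that the mismatch cannot be repaired in law by flipping $a\mapsto -a$ alone, because the constant $\pm(\Delta_2f)^2/(2\sigma^2)$ distinguishes the two sides. I would therefore:
\begin{itemize}
\item state explicitly which log-ratio orientation Definition~\ref{def:plrv} is being applied with on the triangle, i.e. whether the loss is measured from $\bm{x}$ to $\bm{x}'$ or the reverse;
\item present \eqref{eq:plrv-simplex} as that oriented loss;
\item verify the distribution directly: since $a\sim\mathcal{N}(0,\sigma^2)$ under \eqref{eq:jointpdf}, $L(x,y)$ is Gaussian with variance $(\Delta_2f)^2/\sigma^2$ and mean $\mp(\Delta_2f)^2/(2\sigma^2)$ according to the orientation, which recovers \eqref{eq:plrv_reduction} up to that convention.
\end{itemize}
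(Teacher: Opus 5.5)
Your route is the same as the paper's: its appendix proof also writes $\langle \mathbf{n},\bm{v}\rangle=\frac{S}{2}(z-x-y)=\frac{(\Delta_2f)^2(1-2x-2y)}{2x}$ and substitutes this into the scalar loss. Your sign diagnosis is correct.

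The paper obtains \eqref{eq:plrv-simplex} only because its proof starts from $\mathrm{PLRV}=-\frac{1}{2\sigma^2}\left(\|\bm{v}\|^2+2\langle\mathbf{n},\bm{v}\rangle\right)$. That is the negative of \eqref{eq:v-plrv}, and the negative of $\ln(\mathrm{d}P/\mathrm{d}Q)$ in Definition~\ref{def:plrv}. The same flipped sign is reused in the proof of Corollary~\ref{corollary:sphere-plrv}. A one-line check confirms the discrepancy. Since $S=(\Delta_2f)^2/x$,
\[
\frac{\|\mathbf{n}+\bm{v}\|_2^2-\|\mathbf{n}\|_2^2}{2\sigma^2}=\frac{(\Delta_2f)^2}{2\sigma^2}\cdot\frac{z-y}{x},
\qquad\text{whereas}\qquad
\frac{x+2y-1}{x}=\frac{y-z}{x}.
\]
So the stated $L(x,y)$ has law $\mathcal{N}\bigl(-\tfrac{(\Delta_2f)^2}{2\sigma^2},\tfrac{(\Delta_2f)^2}{\sigma^2}\bigr)$, which is not \eqref{eq:plrv_reduction}.

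Of your two repairs, the orientation one is weaker. Reading $L$ as $\ln(\mathrm{d}Q/\mathrm{d}P)$ at $\mathbf{s}=f(\bm{x})+\mathbf{n}$ makes the formula true pathwise. However, that quantity is not the privacy loss of either direction in the usual sense, since it is evaluated under the denominator measure. Its negative mean therefore means it does not recover \eqref{eq:plrv_reduction}, which is what the statement claims.

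The faithful fix is to flip the sign, giving $L(x,y)=\frac{(\Delta_2f)^2}{2\sigma^2}\cdot\frac{1-x-2y}{x}$. For that corrected formula, your Steps 1--3 already constitute a complete proof.
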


\begin{remark}  Note that \eqref{eq:plrv-simplex} 
is a reparameterization of the classic PLRV in \eqref{eq:v-plrv} using coordinates on the simplex. Thus, the two PLRVs should exhibit  the same tail probability bound. In other words, performing measure concentration analysis on $f_{X,Y}(x,y)$ will lead to the same DP guarantee with the Gaussian mechanism. While the analysis of classic Gaussian mechanism relies on one-dimensional projections, i.e., $\langle \mathbf{n}, \bm{v} \rangle$, our simplex mapping reveals the intrinsic high-dimensional geometry of the privacy mechanism. 
\end{remark}


The simplex representation reveals two distinct geometric properties of the
normalized random triangles induced by the Gaussian mechanism. First, the
triangle-feasibility condition determines an exact elliptical support for the
simplex coordinates. This support is independent of the ambient dimension
$M$. Second, as $M$ increases, the probability mass within this admissible
region becomes increasingly concentrated near a degenerate,
noise-dominated configuration. In particular, we have the following proposition which is proved in  
Appendix~\ref{app:proof-phase_transition}.

\begin{prop}[Elliptical support and high-dimensional collapse]\label{prop:phase_transition}
Consider the simplex representation of the Gaussian mechanism defined \eqref{eq:normalized_triangle} and \eqref{eq:tuple}, with
joint density $f_{X,Y}(x,y)$.
\begin{enumerate}
    \item \textbf{Elliptical support.}
    The support of $(X,Y)$ is contained in
    \begin{equation*}
        \mathcal{S}_{\mathrm{ell}}
      \triangleq
        \left\{
        (x,y):
        \begin{array}{l}
            x>0,\quad y>0,\quad x+y<1,\\[1mm]
            4x^2+4xy+4y^2-4x-4y+1<0
        \end{array}
        \right\},
        \label{eq:elliptical-support}
    \end{equation*}
which is a rotated ellipse centered at $(1/3,1/3)$  in the $(x,y)$-coordinates.

    \item \textbf{High-dimensional collapse.}
    As $M\to\infty$, the   simplex coordinates satisfy $x=\Theta\!\left(\frac1M\right)$, $y=\frac12+O\!\left(\frac1M\right)$, and $z=\frac12+O\!\left(\frac1M\right)$. 
    Consequently, the normalized triangle shape concentrates toward $(x,y,z)
        \longrightarrow
        \left(0,\frac12,\frac12\right)$.
\end{enumerate}
\end{prop}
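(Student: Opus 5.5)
For the elliptical support, I will work directly with the geometry of a triangle rather than with the density. Write $a=\|\bm{e}_1\|_2$, $b=\|\bm{e}_2\|_2$, $c=\|\bm{e}_3\|_2$, so that $x=a^2$, $y=b^2$, $z=c^2$ and $x+y+z=1$. Three edges of a (nondegenerate) triangle must satisfy strict triangle inequalities. Equivalently, Heron's quantity must be positive:
\[
16\,\mathrm{Area}^2=2(x y+y z+z x)-(x^2+y^2+z^2)>0 .
\]
I will substitute $z=1-x-y$ and simplify. I expect
\[
2(xy+yz+zx)-(x^2+y^2+z^2)=-(4x^2+4xy+4y^2-4x-4y+1),
\]
which gives exactly the stated inequality.

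Degenerate (collinear) triangles occur only when $\mathbf n$ is parallel to $\bm v$. This is a null event for $M\ge 2$, so the support is contained in $\mathcal S_{\mathrm{ell}}$. As a cross-check, the same condition should coincide with $q(x,y)>0$ in Theorem~\ref{thm:simplex-density}, since $r^2=q$ is the squared orthogonal component $\|\mathbf n_\perp\|^2$ in the rescaled variables.

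For the ellipse claim, I will complete the square. The gradient of $Q(x,y)=4x^2+4xy+4y^2-4x-4y+1$ vanishes at $8x+4y=4=4x+8y$, that is, at $(1/3,1/3)$, where $Q=-1/3<0$. The Hessian $\begin{pmatrix}8&4\\4&8\end{pmatrix}$ is positive definite with eigenvectors $(1,1)$ and $(1,-1)$. Hence $\{Q<0\}$ is an ellipse centered at $(1/3,1/3)$ with axes rotated by $45^\circ$. Finally, I will check that this ellipse lies inside the triangle $x,y>0$, $x+y<1$: on $x=0$ we get $Q=(2y-1)^2\ge 0$, and similarly on the other sides. The ellipse is therefore the incircle of the simplex in its natural metric, so the extra constraints are automatic.

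For the high-dimensional collapse, I will bypass the density and use the explicit representation. Set $\|\bm v\|_2=\Delta_2 f$ (or any fixed $\|\bm v\|$), and decompose $\mathbf n=g\,\bm v/\|\bm v\|+\mathbf n_\perp$ with $g\sim\mathcal N(0,\sigma^2)$ and $\|\mathbf n_\perp\|^2\sim\sigma^2\chi^2_{M-1}$. The normalizing constant is
\[
F=\|\bm v\|^2+\|\mathbf n\|^2+\|\bm v+\mathbf n\|^2=2\|\bm v\|^2+2\|\mathbf n\|^2+2\langle\mathbf n,\bm v\rangle .
\]
Then $x=\|\bm v\|^2/F$, $y=\|\mathbf n\|^2/F$, and $z=1-x-y$. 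Since $\|\mathbf n\|^2=\sigma^2 M(1+O_P(M^{-1/2}))$ and $\langle\mathbf n,\bm v\rangle=O_P(1)$, we get $F=2\sigma^2M(1+o_P(1))$, and hence $x=\Theta(1/M)$.

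For $y$ and $z$, the key identity is
\[
y-z=\frac{\|\mathbf n\|^2-\|\bm v+\mathbf n\|^2}{F}=-\frac{\|\bm v\|^2+2\langle\mathbf n,\bm v\rangle}{F}=O_P(1/M).
\]
Combined with $y+z=1-x=1-\Theta(1/M)$, this yields $y,z=\tfrac12+O(1/M)$ in probability. Consequently $(x,y,z)\to(0,\tfrac12,\tfrac12)$ in probability, and also almost surely by the strong law applied to $\|\mathbf n\|^2/M$.

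The main obstacle is interpreting the $\Theta$ and $O$ statements rigorously. Because $\langle \mathbf n,\bm v\rangle$ is unbounded, these can only hold in probability (or in expectation), not surely. I will state them as $O_P$ bounds, and note that they cannot be improved to deterministic ones since the support $\mathcal S_{\mathrm{ell}}$ is independent of $M$. If more precision is wanted, I would add a Laplace-type check on \eqref{eq:jointpdf}: the factor $q^{(M-3)/2}e^{-q/(2\sigma^2)}$ peaks at $q\approx\sigma^2 M$, which corresponds to $x\sim(\Delta_2 f)^2/(2\sigma^2M)$.
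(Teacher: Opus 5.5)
Your proposal is correct, but it reaches both parts by a different route from the paper.

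\textbf{Elliptical support.} The paper works backwards through the inverse map of Theorem~\ref{thm:simplex-density}. It imposes $r^2=q(x,y)>0$, i.e.\ $4xy>(1-2x-2y)^2$, and rewrites this as $(x-\tfrac13)^2+(x-\tfrac13)(y-\tfrac13)+(y-\tfrac13)^2<\tfrac1{12}$. Your Heron argument is the same inequality seen geometrically, because
\[
q(x,y)=\frac{(\Delta_2 f)^2}{4x^2}\bigl(4xy-(1-2x-2y)^2\bigr),
\]
and $4xy-(1-2x-2y)^2$ is exactly your normalized $16\,\mathrm{Area}^2$. Your route adds two things the paper omits:
\begin{itemize}
\item a clean reason why the bound holds almost surely, namely that the null event $\mathbf n\in\mathrm{span}(\bm v)$ is the only source of degeneracy;
\item a check that the open ellipse lies inside the open triangle, so the linear constraints are redundant. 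The ellipse is tangent to each side at its midpoint, i.e.\ it is the incircle, equivalently $x^2+y^2+z^2<\tfrac12$ on the plane.
\end{itemize}

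\textbf{High-dimensional collapse.} The paper reads scales off the factors of the density. The Gamma kernel $q^{(M-3)/2}e^{-q/(2\sigma^2)}$ forces $q=\Theta(M)$, and the Gaussian factor forces $p=O(1)$. It then inverts $(p,q)\mapsto(x,y)$. This is a heuristic ``dominant mass'' argument that never states the mode of convergence.

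You instead push forward from $\mathbf n$ itself, using $\|\mathbf n\|^2=\sigma^2M(1+O_P(M^{-1/2}))$, $\langle\mathbf n,\bm v\rangle=O_P(1)$, and your identity for $y-z$. The underlying algebra is the same, since $1-2x-2y=2n_\parallel x/\Delta_2 f$. Your version buys three things:
\begin{itemize}
\item it makes the claim a precise in-probability statement;
\item it gives the sharper limit $Mx\to(\Delta_2 f)^2/(2\sigma^2)$;
\item it explains correctly why deterministic $\Theta(1/M)$ bounds are impossible, since the support does not shrink with $M$.
\end{itemize}
Your route does not need Theorem~\ref{thm:simplex-density} at all. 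The paper's route ties the collapse to the structure of the derived density.

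One small caveat concerns your almost-sure claim. Invoking the strong law presupposes that the noise vectors for different $M$ live on one probability space, for example as nested coordinates. Without such a coupling, use Borel--Cantelli instead: the exponential tail bounds for $\chi^2_M/M$ and for $\langle\mathbf n,\bm v\rangle\sim\mathcal N(0,\sigma^2\|\bm v\|^2)$ are summable, which gives complete convergence directly.
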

\noindent\textbf{Geometric interpretation.}
The simplex representation separates an exact finite-dimensional support
property from a dimension-dependent concentration phenomenon. The
elliptical support describes all feasible normalized random triangle 
configurations, while the high-dimensional limit shows that Gaussian-induced
triangles become increasingly dominated by their two noise-related edges.
Although the scalar PLRV can be reconstructed from $(x,y)$, the full simplex
distribution additionally describes how normalized triangle shapes are
distributed over its level sets, i.e., $\{(x,y):L(x,y)=l\}$, where $L(x,y)$ is defined in~\eqref{eq:plrv-simplex} and $l>0$ is a certain privacy loss value.

\subsection{Embedding Random Triangles into   Hemisphere}

The simplex representation describes a normalized triangle through its
three squared edge lengths. Now, we   introduce a complementary representation based on
the singular values and principal directions of the normalized triangle 
via the following two steps.

\textbf{Step 1: A unique SVD of $\widetilde{\triangle}$.} Given the  normalized random triangle $\widetilde{\triangle}$ defined in \eqref{eq:normalized_triangle}, we let $\widetilde{\triangle} = \bm{U}\bm{\Sigma}\bm{V}^T$ be its SVD. 
Note that the orthogonal factor $\bm{U}$ in the SVD is not important because $\widetilde{\triangle}$ and $\bm{U}^{-1}\widetilde{\triangle}$ correspond to the same normalized triangle being just rotated. 
Hence, \(\bm{U}\) can be 
quotiented out in the shape representation~\cite{kendall1989survey,small2012statistical}, and we set  $\bm{U}=\bm{I}$ (the identity matrix) and consider  $\widetilde{\triangle} = \bm{I}\bm{\Sigma}\bm{V}^T$.   Moreover, 
$\mathrm{rank}(\widetilde{\triangle})\leq 2$ since $\bm{e}_1+\bm{e}_2+\bm{e}_3=\bm{0}$.

Thus, to perform   SVD on $\widetilde{\triangle}$,  we first use the Helmert transformation $\mathbf{H}=\begin{bmatrix}
        \frac{1}{\sqrt{2}} & \frac{1}{\sqrt{6}}\\
        -\frac{1}{\sqrt{2}} & \frac{1}{\sqrt{6}}\\
        0 & -\frac{2}{\sqrt{6}}
    \end{bmatrix}$ to  map
the three centered edge coordinates to their two-dimensional column subspace.
In particular, the Helmert matrix satisfies $\mathbf{H}^\top   \mathbf{H}=  \mathbf{I}_2$, $\mathbf{H}^\top\mathbf{1}_3=0$, and $\mathbf{H}  \mathbf{H}^\top
    =
      \mathbf{I}_3-\frac{1}{3}\mathbf{1}_3\mathbf{1}_3^\top$.

The SVD of $\widetilde{\Delta}$ can therefore be applied to the reduced two-column matrix
$\widetilde{\Delta}\mathbf{H}\in\mathbb{R}^{M\times 2}$. This reduction preserves all nonzero
singular values because
\begin{equation*}
    (\widetilde{\Delta}\mathbf{H})
    (\widetilde{\Delta}\mathbf{H})^\top
    =
    \widetilde{\Delta}\mathbf{H}\mathbf{H}^\top
    \widetilde{\Delta}^\top
    = \widetilde{\Delta}\left( \mathbf{I}_3-\frac{1}{3}\mathbf{1}_3\mathbf{1}_3^\top\right)\widetilde{\Delta}^\top = 
    \widetilde{\Delta}\widetilde{\Delta}^\top,
\end{equation*}
where the last equality follows from
$\widetilde{\Delta}\mathbf{1}_3=\mathbf{0}$ (the columns adds to zeros column).
Thus, right multiplying $\widetilde{\Delta}$  by $\mathbf{H}$ removes only the redundant null
space of $\widetilde{\Delta}$ spanned by $\mathbf{1}_3$ and preserves the complete nonzero
spectral information of $\widetilde{\Delta}$. Then, according to~\cite{edelman2015random}, 
let the singular values $\sigma_1\geq\sigma_2\geq0$ and $\alpha\in[0,\pi)$, the unique SVD of $\widetilde{\triangle}\mathbf{H}$ is 
\begin{equation}\label{eq:unique-svd}
        \widetilde{\triangle}\mathbf{H} = \bm{I}\bm{\Sigma}\bm{V}^T = \bm{I} \left[\begin{matrix}
\sigma_1 & 0 \\
0 & \sigma_2  \\
\vdots & \vdots \\
0 & 0 
\end{matrix}\right]\left[\begin{matrix}
\cos \alpha & -\sin \alpha  \\
\sin \alpha & \cos \alpha  
\end{matrix}\right], \quad \bm{I}\in\R^{M\times M}, \bm{\Sigma}\in\R^{M\times 2}, \bm{V}\in\R^{2\times 2}.
\end{equation}

\textbf{Step 2: Map $\widetilde{\triangle}$ to hemisphere.} By resorting to stochastic shape theory~\cite{kendall1989survey,small2012statistical}, we   associate the non-zero entries in $\bm{\Sigma}$ ($\sigma_1$ and $\sigma_2$) and $\alpha$ with a point on the hemisphere of radius $\frac{1}{2}$, on which 
\begin{equation}\label{eq:long-lat}
    \begin{aligned}
        \text{the longitude is represented by } &  \theta = 2\alpha\in [0, 2\pi),\\
        \text{the height is represented by } & h=\sigma_1  \sigma_2\in\left[0,\frac{1}{2}\right],\\
        \text{the latitude is represented by } &  \phi = \arcsin(2\sigma_1  \sigma_2)\in\left[0,\frac{\pi}{2}\right].
    \end{aligned}
\end{equation}

\begin{table*}[!htbp]
    \centering
    \begin{tabular}{c|c|c|c}
     & $M=3$ & $M=9$ & $M=15$ \\ \hline

 $\epsilon=2$ &
        \adjustbox{trim=20pt 10pt 18pt 14pt,clip}
        {\includegraphics[width=0.35\textwidth]{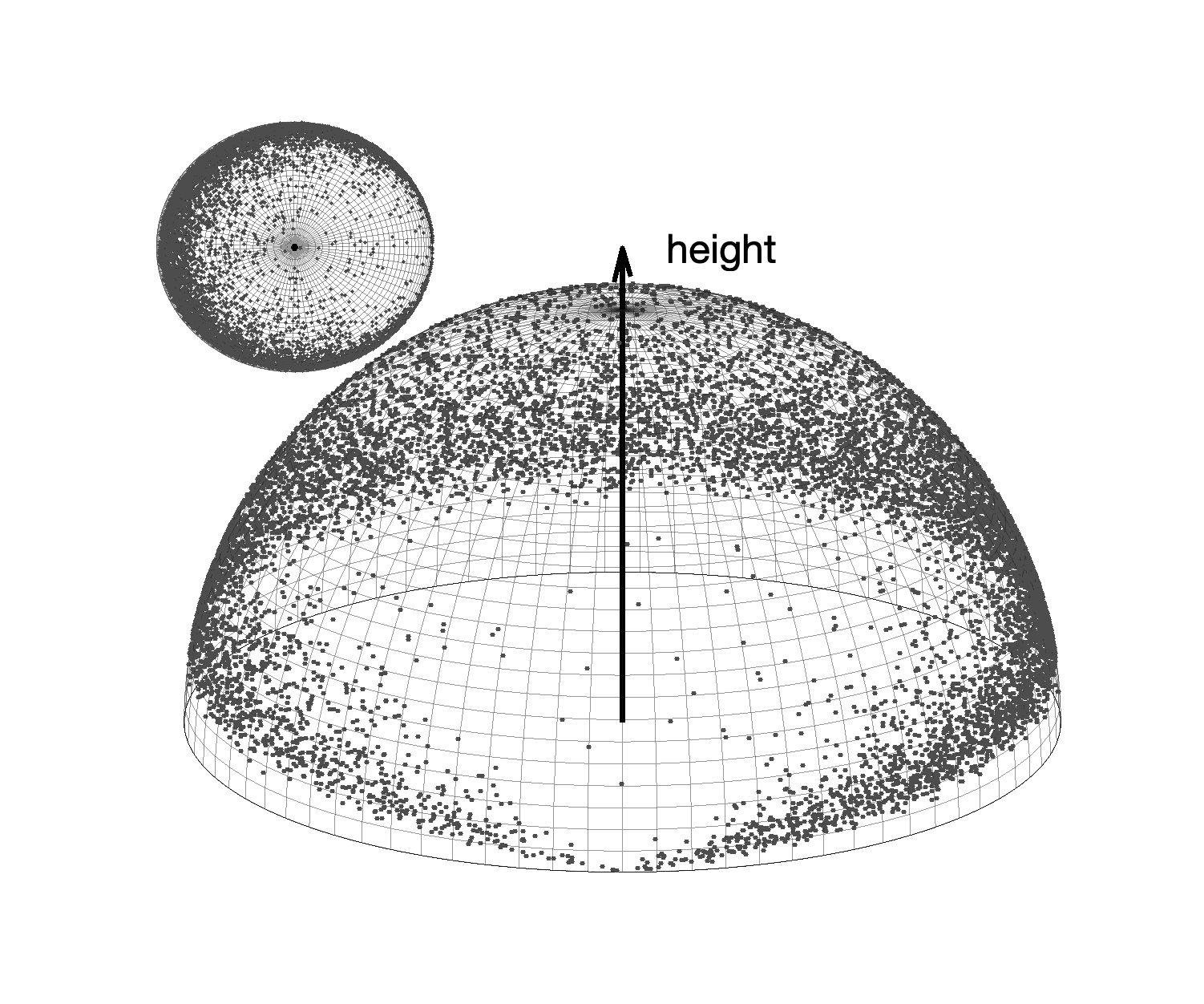}} &
        \adjustbox{trim=20pt 10pt 18pt 14pt,clip}{\includegraphics[width=0.35\textwidth]{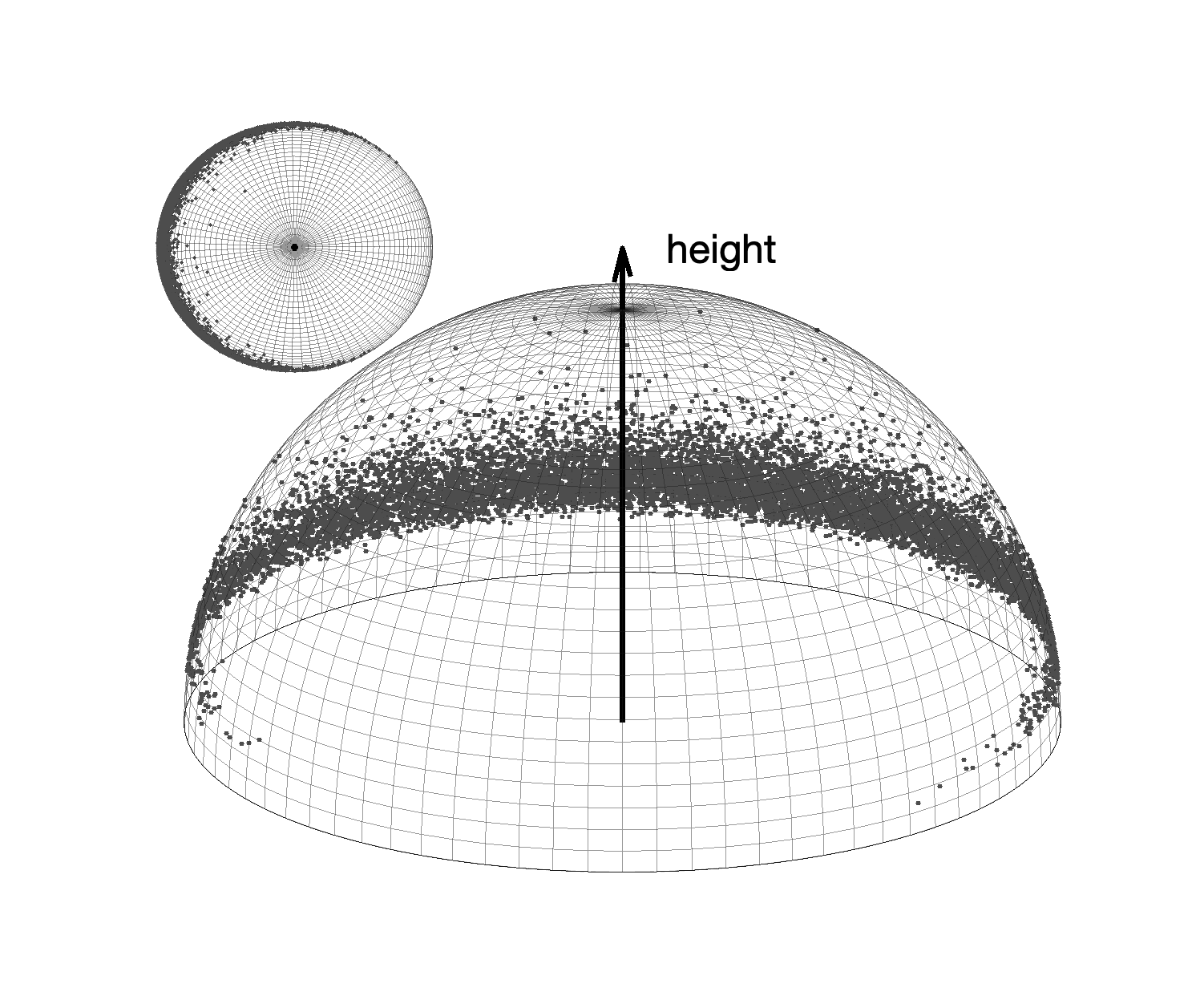}} &
        \adjustbox{trim=20pt 10pt 18pt 14pt,clip}{\includegraphics[width=0.35\textwidth]{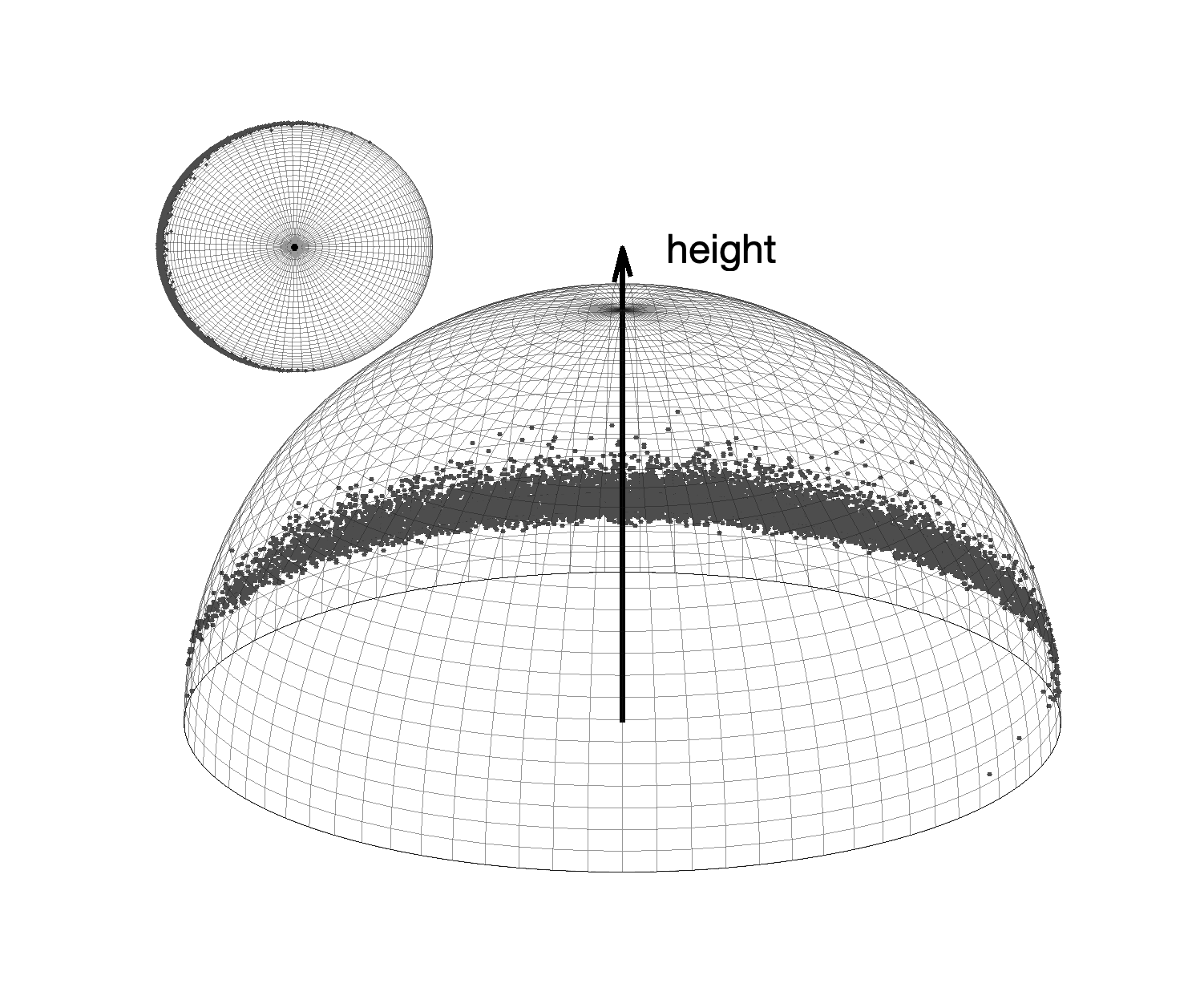}} \\\hline

    $\epsilon=6$ &
        \adjustbox{trim=20pt 10pt 18pt 14pt,clip}{\includegraphics[width=0.35\textwidth]{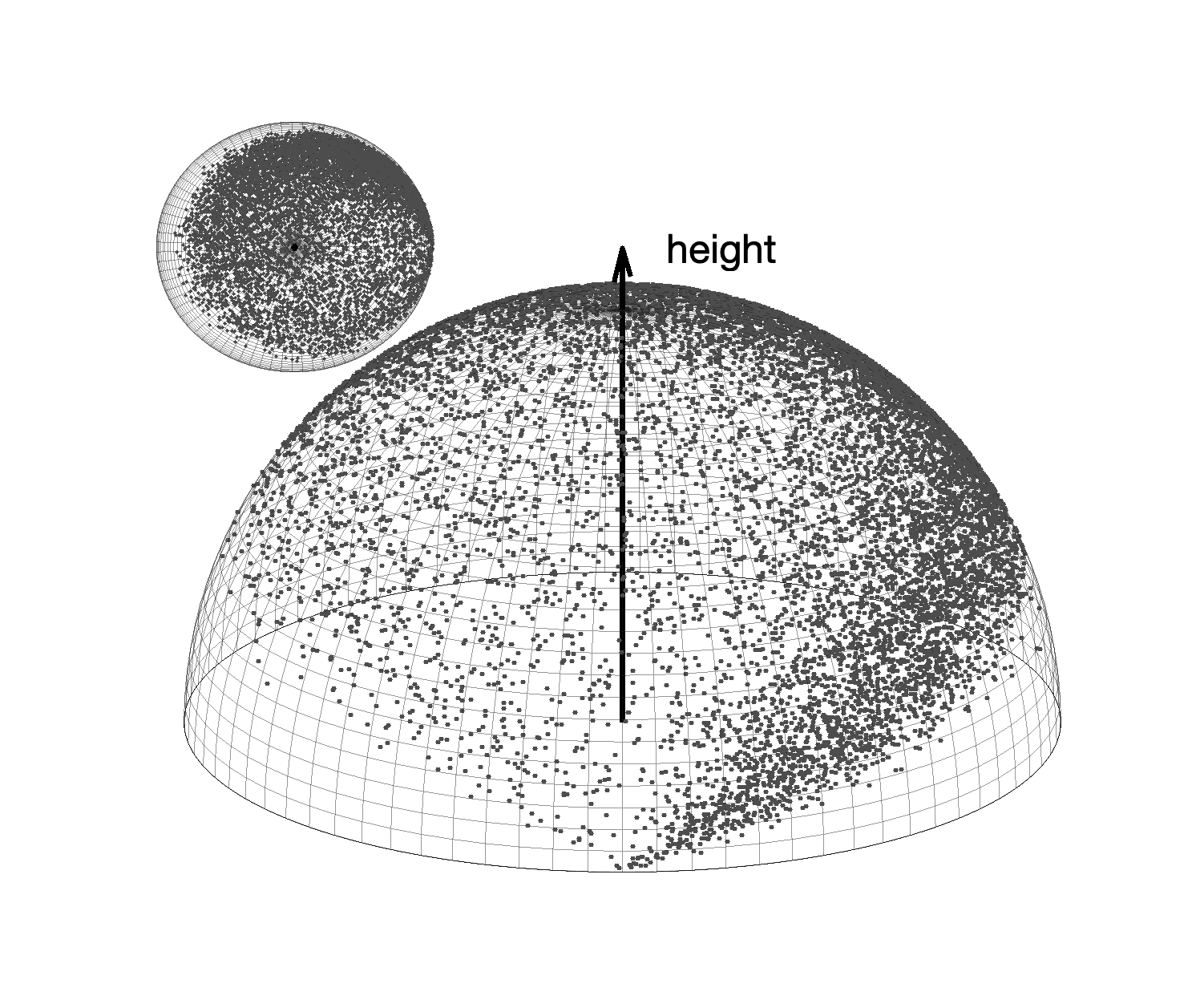}} &
        \adjustbox{trim=20pt 10pt 18pt 14pt,clip}{\includegraphics[width=0.35\textwidth]{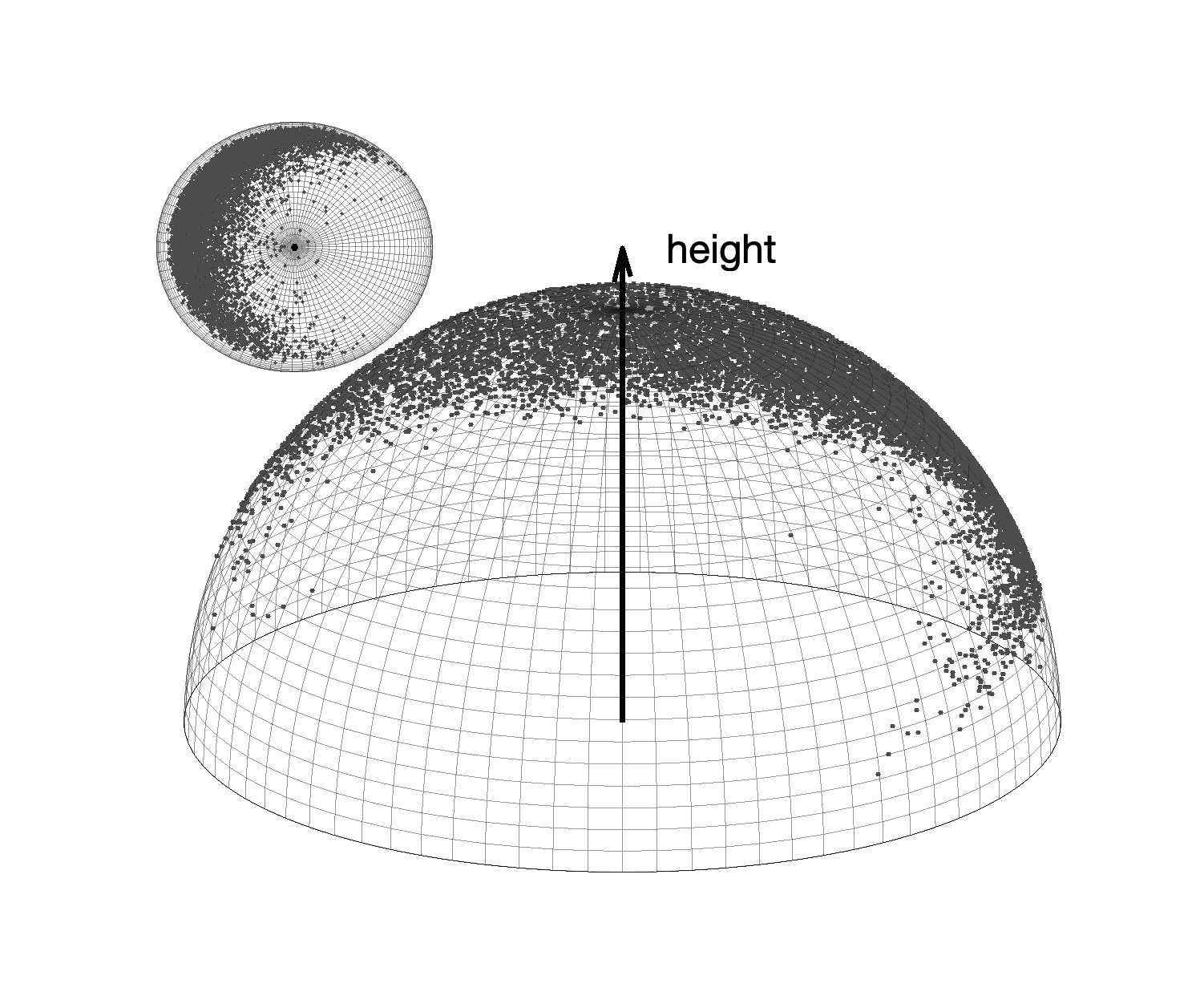}} &
        \adjustbox{trim=20pt 10pt 18pt 14pt,clip}{\includegraphics[width=0.35\textwidth]{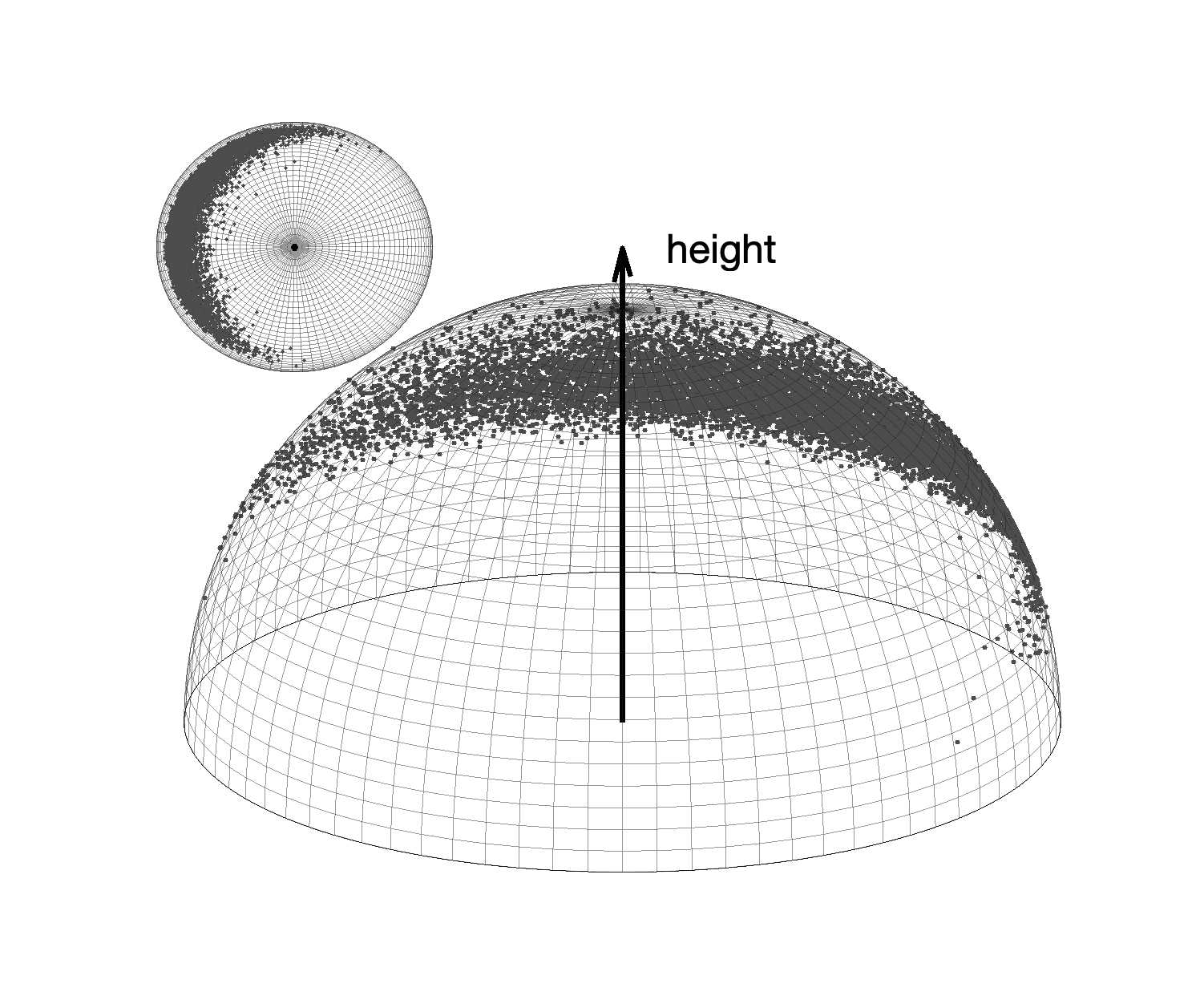}} \\\hline

 $\epsilon=10$ &
        \adjustbox{trim=20pt 10pt 18pt 14pt,clip}{\includegraphics[width=0.35\textwidth]{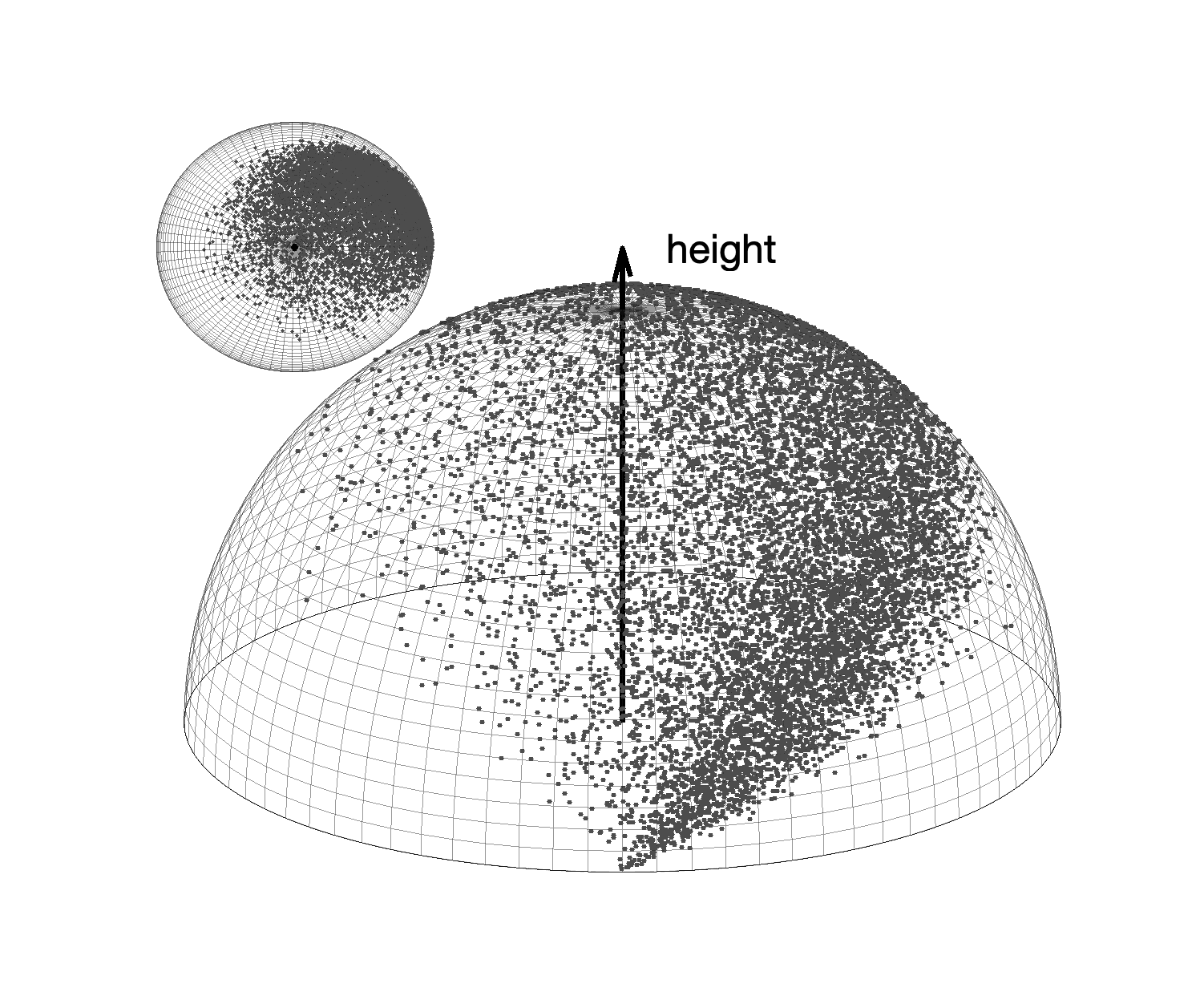}} &
        \adjustbox{trim=20pt 10pt 18pt 14pt,clip}{\includegraphics[width=0.35\textwidth]{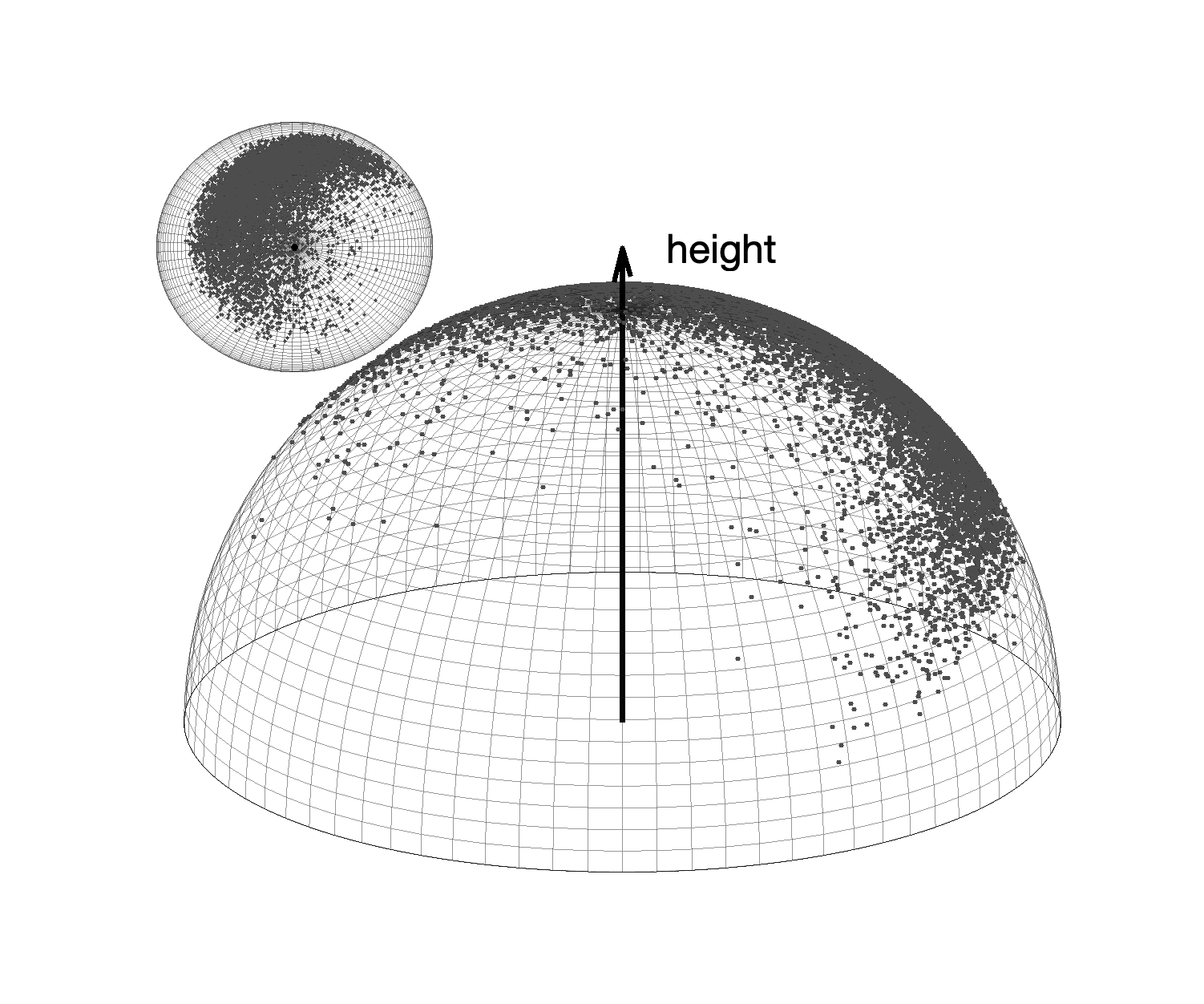}} &
        \adjustbox{trim=20pt 10pt 18pt 14pt,clip}{\includegraphics[width=0.35\textwidth]{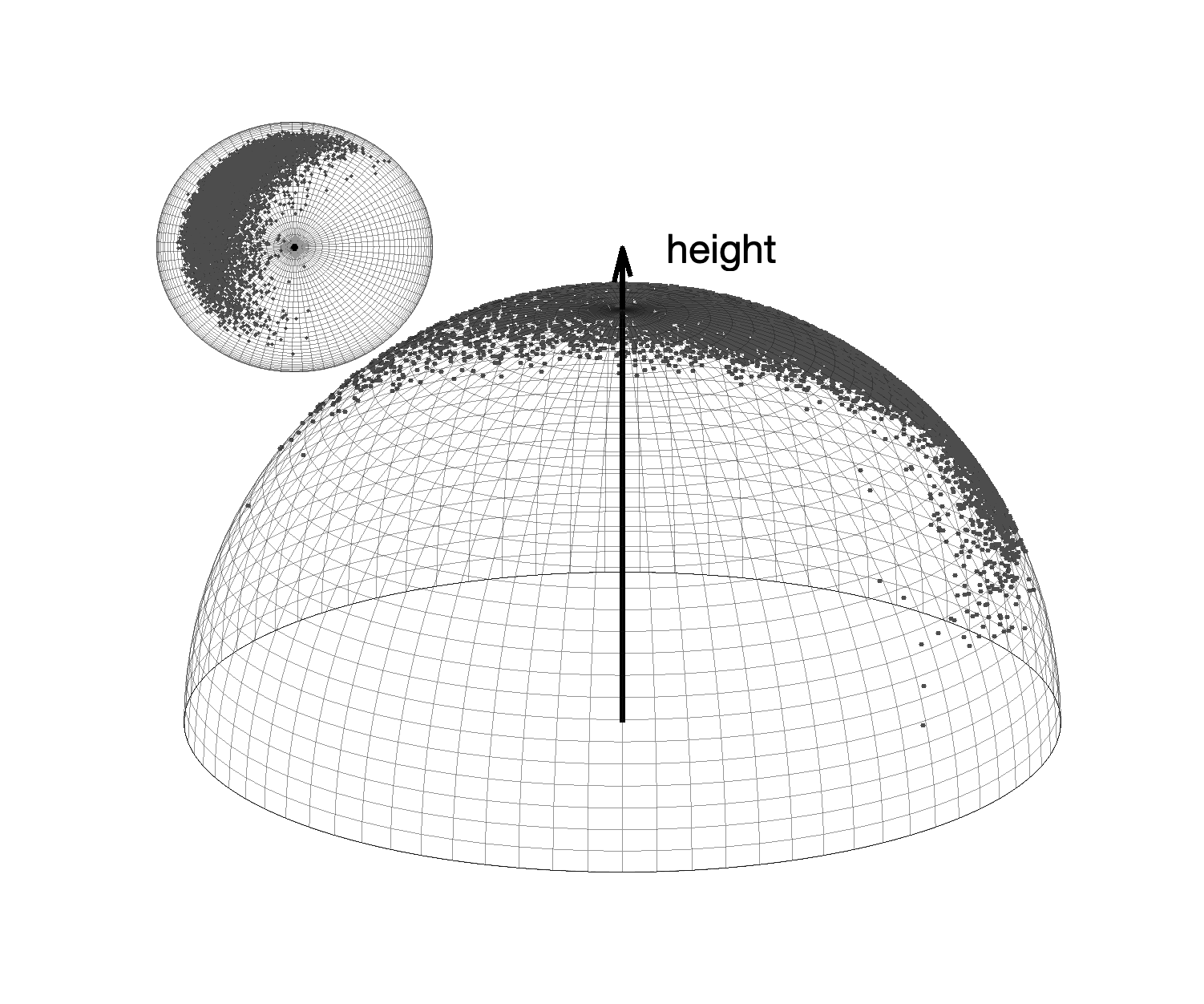}} \\
    \end{tabular}
\caption{Point clouds (longitude   $2\alpha$, latitude   $\arcsin(2\sigma_1  \sigma_2)$,   height   $\sigma_1  \sigma_2$) on semi-sphere, when    $M\in\{3,9,15\}$, $\epsilon\in\{2,10\}$,   $\delta = 10^{-5}$, and $\Delta_2f=1$}
    \label{tab:sphere-view}
\end{table*}

In Table~\ref{tab:sphere-view}, we visualize   10,000 such points   on hemisphere when $M\in\{3,9,15\}$, $\epsilon\in\{2,6,10\}$, $\delta = 10^{-5}$, and $\Delta_2f=1$. Table~\ref{tab:sphere-view} reveals three clear geometric trends. First, for a fixed dimension $M$, increasing $\epsilon$ (i.e., injecting light noise) makes the point clouds move upward on the hemisphere, meaning that the typical height $h=\sigma_1\sigma_2$ becomes larger. This agrees with the theorem, since a larger $\epsilon$ corresponds to a smaller Gaussian noise scale $\sigma$, and therefore the   signal component $\boldsymbol{v}$ plays a stronger role in the normalized triangle geometry. Second, for a fixed privacy level, increasing the dimension $M$ makes the clouds substantially more concentrated. Geometrically, this is because the magnitude of orthogonal  noise  component $n_\perp\sim \sigma\chi_{M-1}$ becomes more sharply concentrated around its typical value as $M$ grows, which suppresses low-dimensional variability and produces a thinner and more stable band on the hemisphere. Third, the point clouds are not rotationally symmetric in longitude: instead, they exhibit a clear directional bias and form crescent-like or band-like structures. This asymmetry comes from the   shift induced by the sensitivity vector $\boldsymbol{v}$, which breaks uniformity in the longitude direction   after Frobenius normalization. Taken together, these plots show that stronger privacy noise pushes the normalized random triangles toward lower latitudes, while higher dimension sharpens the concentration pattern and makes the hemisphere geometry increasingly structured.

 These plots show that stronger privacy, i.e., heavy noise, pushes the normalized random triangles toward lower latitudes, while higher dimension sharpens the concentration pattern and makes the hemisphere geometry increasingly structured. In particular, we have the following propositions which are proved in  Appendix~\ref{app:prove-bijective} and
Appendix~\ref{app:diffeomorphism}.

\begin{prop}\label{prop:bijective-mapping}
   Define the Gram matrix associated with $\widetilde{\Delta}\mathbf{H}$ as 
   \begin{equation*}
       \mathbf{G}
    =
    (\widetilde{\Delta}\mathbf{H})^\top
    (\widetilde{\Delta}\mathbf{H})
    =
    \mathbf{H}^\top
    \widetilde{\Delta}^\top
    \widetilde{\Delta}
     \mathbf{H}\in\R^{2\times 2},
   \end{equation*}
 which has   eigendecomposition  
\begin{equation*}
      \mathbf{G}
    =
\mathbf{V}
    \begin{bmatrix}
        \sigma_1^2 & 0\\\\
        0 & \sigma_2^2
    \end{bmatrix}
   \mathbf{V}^\top.
\end{equation*}
Then the algebraic bijection mapping between the SVD components ($\sigma_1$, $\sigma_2$) of $\widetilde{\Delta}\mathbf{H}$ and the 
coordinates (longitude, latitude) are
\begin{equation*}
\begin{aligned}
    \mathbf{G}_{11} - \mathbf{G}_{22} &= (\sigma_1^2 - \sigma_2^2)(\cos^2\alpha - \sin^2\alpha) = \cos\phi \cos\theta,\\
    2\mathbf{G}_{12} &= (\sigma_1^2 - \sigma_2^2)(2\sin\alpha \cos\alpha) = \cos\phi \sin\theta. 
    \end{aligned}
\end{equation*}
\end{prop}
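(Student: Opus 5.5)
The plan is to compute the entries of $\mathbf{G}$ directly from its eigendecomposition, with $\mathbf{V}$ the rotation by $\alpha$ from \eqref{eq:unique-svd}. Then we identify the resulting quantities with the hemisphere coordinates in \eqref{eq:long-lat}, using the normalization $\|\widetilde{\triangle}\|_F=1$.

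\textbf{Step 1 (entries of $\mathbf{G}$).} From \eqref{eq:unique-svd} we have $\widetilde{\triangle}\mathbf{H}=\bm{\Sigma}\bm{V}^T$, so $\mathbf{G}=\bm{V}\bm{\Sigma}^T\bm{\Sigma}\bm{V}^T=\bm{V}\,\mathrm{diag}(\sigma_1^2,\sigma_2^2)\,\bm{V}^T$. Writing $\bm{V}=\begin{bmatrix}\cos\alpha & -\sin\alpha\\ \sin\alpha & \cos\alpha\end{bmatrix}$ and multiplying out gives
\[
\mathbf{G}_{11}=\sigma_1^2\cos^2\alpha+\sigma_2^2\sin^2\alpha,\qquad \mathbf{G}_{22}=\sigma_1^2\sin^2\alpha+\sigma_2^2\cos^2\alpha,\qquad \mathbf{G}_{12}=(\sigma_1^2-\sigma_2^2)\sin\alpha\cos\alpha .
\]
Hence $\mathbf{G}_{11}-\mathbf{G}_{22}=(\sigma_1^2-\sigma_2^2)(\cos^2\alpha-\sin^2\alpha)$ and $2\mathbf{G}_{12}=(\sigma_1^2-\sigma_2^2)\,2\sin\alpha\cos\alpha$. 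These are the first equalities in the statement. By the double-angle formulas with $\theta=2\alpha$, they equal $(\sigma_1^2-\sigma_2^2)\cos\theta$ and $(\sigma_1^2-\sigma_2^2)\sin\theta$.

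\textbf{Step 2 (identify $\sigma_1^2-\sigma_2^2$ with $\cos\phi$).} This step carries the content. First, $\sigma_1^2+\sigma_2^2=\Tr\mathbf{G}=\|\widetilde{\triangle}\mathbf{H}\|_F^2$. The reduction shown before \eqref{eq:unique-svd} gives $\widetilde{\triangle}\mathbf{H}\mathbf{H}^\top\widetilde{\triangle}^\top=\widetilde{\triangle}\widetilde{\triangle}^\top$, so this trace equals $\|\widetilde{\triangle}\|_F^2=1$ by \eqref{eq:normalized_triangle}. Then
\[
(\sigma_1^2-\sigma_2^2)^2=(\sigma_1^2+\sigma_2^2)^2-4\sigma_1^2\sigma_2^2=1-(2\sigma_1\sigma_2)^2=1-\sin^2\phi=\cos^2\phi,
\]
using $\sin\phi=2\sigma_1\sigma_2$ from \eqref{eq:long-lat}. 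Since $\sigma_1\ge\sigma_2\ge0$ and $\phi\in[0,\pi/2]$, both $\sigma_1^2-\sigma_2^2$ and $\cos\phi$ are nonnegative, so $\sigma_1^2-\sigma_2^2=\cos\phi$. Substituting this into Step 1 gives the two displayed identities.

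\textbf{Step 3 (bijectivity).} The map $(\sigma_1,\sigma_2,\alpha)$, subject to $\sigma_1^2+\sigma_2^2=1$, sends the point to $(\mathbf{G}_{11}-\mathbf{G}_{22},\,2\mathbf{G}_{12},\,2\sigma_1\sigma_2)=(\cos\phi\cos\theta,\cos\phi\sin\theta,\sin\phi)$. This is the standard parametrization of the upper unit hemisphere; the paper's radius $\tfrac12$ comes from the height $h=\sigma_1\sigma_2=\tfrac12\sin\phi$.

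For the inverse, $\mathbf{G}$ is a symmetric $2\times2$ matrix with trace $1$, so it is determined by $\mathbf{G}_{11}-\mathbf{G}_{22}$ and $\mathbf{G}_{12}$. Also $\det\mathbf{G}=\sigma_1^2\sigma_2^2=\tfrac14\sin^2\phi$ is recovered from them. So $(\theta,\phi)$ determines $\mathbf{G}$, and $\mathbf{G}$ determines $(\sigma_1,\sigma_2,\alpha)$ through the unique SVD of \cite{edelman2015random} with $\alpha\in[0,\pi)$.

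\textbf{Main obstacle.} The only delicate point is this uniqueness at degenerate configurations. At the pole $\sigma_1=\sigma_2$ (so $\cos\phi=0$), $\alpha$ is undefined. I would handle this as in \cite{edelman2015random}: the longitude is irrelevant there, and the bijection holds on the complement, which is a null set under the density.
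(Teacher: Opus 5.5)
Your proof is correct and follows essentially the paper's own route. You expand $\mathbf{G}=\mathbf{V}\,\mathrm{diag}(\sigma_1^2,\sigma_2^2)\mathbf{V}^\top$ with $\mathbf{V}$ taken as the rotation by $\alpha$, which is the same convention the paper's proof uses (note that \eqref{eq:unique-svd} literally displays that rotation as $\bm{V}^T$, and that reading would flip the sign of $\mathbf{G}_{12}$). You then use $\Tr\mathbf{G}=\|\widetilde{\triangle}\|_F^2=1$ and $\sin\phi=2\sigma_1\sigma_2$ to get $\sigma_1^2-\sigma_2^2=\cos\phi$, and finish with the double-angle formulas. Your Step 3 goes a little beyond the paper, which only asserts the bijection: you construct the inverse explicitly and flag the pole $\sigma_1=\sigma_2$, where $\alpha$ is undefined.
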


\begin{prop}\label{prop:diffeomorphism}
    Let $\mathbf{n}_\parallel$ and $\mathbf{n}_\perp$ be the parallel and orthogonal components of $\mathbf{n}$ with respect to the sensitive vector $\boldsymbol{v}$ (as shown in Figure~\ref{fig:geo-dp}(a)), and define scalar variables   ${n}_\parallel$ and ${n}_\perp$ as
    \begin{equation*}
            n_\parallel = \frac{\langle \mathbf{n}, \bm{v} \rangle}{\|\bm{v}\|} \in \mathbb{R} \quad \text{and}\quad 
            n_\perp = \left\|\mathbf{n} - n_\parallel \frac{\bm{v}}{\|\bm{v}\|}\right\|_2 \in \mathbb{R}^+.
    \end{equation*}
Then the   diffeomorphism mappings     $ (n_\parallel, n_\perp) \mapsto (\tan\theta, \sin\phi) $ and $(\theta, \phi) \mapsto (n_\parallel, n_\perp)$ hold
\begin{equation}\label{eq:diffeomorphism}
    \left\{
    \begin{aligned}
    \tan \theta &= \frac{(\Delta_2f) n_\perp + 2 n_\parallel n_\perp}{(\Delta_2f)^2 + (\Delta_2f) n_\parallel + n_\parallel^2 - n_\perp^2}, \\
\sin\phi &= \frac{\sqrt{3} (\Delta_2f)n_\perp}{(\Delta_2f)^2 + (\Delta_2f)n_\parallel + n_\parallel^2 + n_\perp^2}.
\end{aligned}
\right.\quad \text{and} \quad
    \left\{
    \begin{aligned}
    n_\parallel &= \frac{\sqrt{3}\Delta_2f}{2} \cot\phi \sin\theta - \frac{\Delta_2f}{2}, \\
    n_\perp &= \frac{\sqrt{3}\Delta_2f}{2} \frac{1 - \cos\phi \cos\theta}{\sin\phi}.
\end{aligned}
\right.
\end{equation}
\end{prop}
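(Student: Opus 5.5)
The plan is to reduce everything to an explicit planar computation. We use rotation invariance, exactly as in Step 1 where $\bm U$ is quotiented out. We place $\bm v$ and $\mathbf n$ in a fixed two-dimensional coordinate frame: writing $\Delta\triangleq\Delta_2 f$ and taking the worst case $\|\bm v\|_2=\Delta$, we may assume $\bm v=(\Delta,0,0,\dots)^\top$ and $\mathbf n=(n_\parallel,n_\perp,0,\dots)^\top$. This is legitimate because $n_\perp\ge 0$ is the length of the orthogonal component, and any rotation taking $\bm v/\|\bm v\|$ and $\mathbf n_\perp/n_\perp$ to the first two basis vectors changes neither the singular values nor $\bm V$ in \eqref{eq:unique-svd}. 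The unnormalized triangle is then the $2\times 3$ matrix with columns $(\Delta,0)$, $(n_\parallel,n_\perp)$ and $(-\Delta-n_\parallel,-n_\perp)$. Its squared Frobenius norm is
\begin{equation*}
F=2\bigl(\Delta^2+\Delta n_\parallel+n_\parallel^2+n_\perp^2\bigr),
\end{equation*}
and this common denominator already appears in \eqref{eq:diffeomorphism}.

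Next I multiply by the Helmert matrix $\mathbf H$ to obtain the $2\times2$ matrix $\widetilde\triangle\mathbf H$, and then form $\mathbf G$ explicitly.

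\textbf{Latitude.} Since $\sin\phi=2\sigma_1\sigma_2=2\sqrt{\det\mathbf G}=2|\det(\widetilde\triangle\mathbf H)|$, the latitude formula follows from a single $2\times 2$ determinant. A direct computation gives $\det(\triangle\mathbf H)=\sqrt3\,\Delta n_\perp$, because the $n_\parallel$ terms cancel. Hence $\sin\phi=2\sqrt3\Delta n_\perp/F$, which is exactly the stated expression.

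\textbf{Longitude.} Proposition~\ref{prop:bijective-mapping} gives $\cos\phi\cos\theta=\mathbf G_{11}-\mathbf G_{22}$ and $\cos\phi\sin\theta=2\mathbf G_{12}$. Therefore $\tan\theta=2\mathbf G_{12}/(\mathbf G_{11}-\mathbf G_{22})$, and the factor $F$ cancels. Both numerator and denominator are quadratic polynomials in $(n_\parallel,n_\perp)$ that can be read off from the entries of $\triangle\mathbf H$.

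I expect the main obstacle to be the conventions. A naive computation with the columns ordered as $[\bm v\ \mathbf n\ (-\bm v-\mathbf n)]$ and the $\mathbf H$ displayed above does not obviously produce the numerator $\Delta n_\perp+2n_\parallel n_\perp$. The stated form corresponds to a particular labelling of the vertices, equivalently a fixed rotation or reflection of the longitude origin, and possibly to the choice of which singular vector is called $\alpha$. I would therefore first pin down the exact convention: the order of the Helmert columns, and whether $\theta$ is measured from $\mathbf G_{11}-\mathbf G_{22}$ or from a cyclically relabelled version. I would then check that the resulting quadratic forms reduce to $(\Delta+2n_\parallel)n_\perp$ and $\Delta^2+\Delta n_\parallel+n_\parallel^2-n_\perp^2$, up to a common positive factor.

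A clean way to organize this is complex notation. Set $w=n_\parallel+i n_\perp$, so that the three vertices are $0$, $\Delta$ and $\Delta+w$. Then $(\mathbf G_{11}-\mathbf G_{22})+2i\mathbf G_{12}$ is a fixed multiple of $\sum_k \tilde e_k^2$, where the $\tilde e_k$ are the complexified Helmert-rotated edges. This quantity equals a constant times $\Delta^2+\Delta\bar w+\bar w^2$, or the analogous expression in $w$, up to a cube root of unity coming from the vertex labelling. Its real and imaginary parts are exactly the stated quadratic forms.

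\textbf{Inverse and smoothness.} For the inverse map, the forward relations can be read as the statement that $(\cos\phi\cos\theta,\cos\phi\sin\theta,\sin\phi)$ is the inverse stereographic image of the complex shape $w/\Delta$, suitably centred at the equilateral shape. I would invert this directly, using three linear-fractional identities sharing the denominator $D=\Delta^2+\Delta n_\parallel+n_\parallel^2+n_\perp^2$:
\begin{itemize}
\item compute $D\,(1-\cos\phi\cos\theta)$ and show that it equals $\tfrac{2}{\sqrt3}\,\Delta\, n_\perp\cdot\tfrac{\sqrt3}{2}\cdot(\text{const})$, i.e.\ that it is linear in $n_\perp$;
\item compute $D\cos\phi\sin\theta$ and show that it is linear in $n_\parallel+\Delta/2$;
\item divide both by $D\sin\phi=\sqrt3\Delta n_\perp$.
\end{itemize}
The two ratios yield $n_\perp=\tfrac{\sqrt3\Delta}{2}(1-\cos\phi\cos\theta)/\sin\phi$ and $n_\parallel+\Delta/2=\tfrac{\sqrt3\Delta}{2}\cot\phi\sin\theta$. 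Both maps are rational with nonvanishing denominators on $n_\perp>0$ and $\phi\in(0,\pi/2]$, and they are mutually inverse. This gives the diffeomorphism on the open domain, with the degenerate cases $n_\perp=0$ and $\phi=0$ excluded as a null set.
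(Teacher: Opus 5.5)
\textbf{There is a genuine gap in the longitude step.} Your latitude step is correct, and it is the paper's computation in another form ($\det(\mathbf Y\mathbf H)^2=\det(\mathbf Y\mathbf Y^\top)=3\Delta^2n_\perp^2$). The mismatch you noticed in the longitude is structural, not a labelling convention. Take $\mathbf G=(\widetilde{\triangle}\mathbf H)^\top(\widetilde{\triangle}\mathbf H)$ as defined in Proposition~\ref{prop:bijective-mapping}, with the paper's ordering and $\mathbf H$. Then $F(\mathbf G_{11}-\mathbf G_{22})=-(\Delta^2+4\Delta n_\parallel+n_\parallel^2+n_\perp^2)$ and $2F\mathbf G_{12}=\sqrt3(\Delta^2-n_\parallel^2-n_\perp^2)$.

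In any convention, $\mathbf G$ depends on the triangle only through the edge Gram matrix $\widetilde{\triangle}^\top\widetilde{\triangle}$, which is unchanged by the reflection $n_\perp\mapsto -n_\perp$. The conventions you list do not change this:
\begin{itemize}
\item relabelling vertices or reordering the Helmert columns replaces $\mathbf G$ by $\mathbf O^\top\mathbf G\mathbf O$ for a fixed orthogonal $\mathbf O$;
\item exchanging the singular vectors shifts $\theta$ by $\pi$.
\end{itemize}
Each of these acts on $(\cos\phi\cos\theta,\cos\phi\sin\theta)$ by a fixed rotation or reflection, so every variant gives a $\tan\theta$ that is even in $n_\perp$. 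The stated $\tan\theta$ is odd in $n_\perp$ and not identically zero. Rational functions that agree on $\{n_\perp>0\}$ agree identically, so no convention produces the statement. The inverse formulas fail for the same reason: with the genuine $\mathbf G$ one gets $D(1-\cos\phi\cos\theta)=\tfrac32\bigl((\Delta+n_\parallel)^2+n_\perp^2\bigr)$.

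\textbf{What your complex identity actually computes.} Treat the columns $y_k$ of $\mathbf Y$ as complex numbers. Then
\[
\sum_k y_k^2=(\mathbf Y\mathbf Y^\top)_{11}-(\mathbf Y\mathbf Y^\top)_{22}+2i(\mathbf Y\mathbf Y^\top)_{12}=2(\Delta^2+\Delta w+w^2).
\]
Here $\mathbf Y\mathbf Y^\top=(\mathbf Y\mathbf H)(\mathbf Y\mathbf H)^\top$ is the \emph{left} Gram matrix, not the right Gram $\mathbf G$. The two share eigenvalues but not eigenvectors: writing $\mathbf Y\mathbf H\zeta=\mu\zeta+\nu\bar\zeta$, the left one gives $4\mu\nu$ and the right one gives $4\bar\mu\nu$. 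This left Gram matrix is exactly the paper's $\mathbf G_{un}$. The paper's proof reaches the statement precisely by setting $\mathbf G=\mathbf G_{un}/\Tr(\mathbf G_{un})$ in \eqref{eq:gram-ungram} and then applying \eqref{eq:coord_1}--\eqref{eq:coord_2}.

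\textbf{How to complete your argument.} You have to make that substitution explicit: read the longitude off $\mathbf Y\mathbf Y^\top/F$ in the frame with $\bm v$ on the first axis and $\mathbf n_\perp$ on the positive second axis. This quantity is not rotation invariant, since rotating the plane by $\psi$ multiplies $\sum_k y_k^2$ by $e^{2i\psi}$. Your WLOG step only protects the singular values and $\bm V$, so it does not justify this choice. It is an identification, not a consequence of Proposition~\ref{prop:bijective-mapping}.

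Once you adopt it, your inverse argument is correct and cleaner than the paper's quadratic with root selection. The three identities are $D(1-\cos\phi\cos\theta)=2n_\perp^2$ (quadratic in $n_\perp$, not linear as you wrote), $D\cos\phi\sin\theta=2n_\perp(n_\parallel+\Delta/2)$, and $D\sin\phi=\sqrt3\,\Delta n_\perp$. The two quotients then give $n_\perp$ and $n_\parallel$ directly.
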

\begin{remark}
The longitude $\theta$ is in the range of $[0,2\pi)$. By checking the signs of $\sin\phi$, $\cos\phi\cos\theta$ in \eqref{eq:alg_cos_theta}, and $\cos\phi \sin\theta$ in \eqref{eq:alg_sin_theta}, one can      uniquely determine the value of $\theta$ from $\tan\theta$. 
\end{remark}

By leveraging the conclusions in Proposition~\ref{prop:bijective-mapping} and~\ref{prop:diffeomorphism}, we prove the PDF of $(\theta,\phi)$. The details are deferred to 
Appendix~\ref{app:proof-spherepdf}.
\begin{theorem}\label{thm:spherepdf}
    Let $M$ be the dimension of the sensitive query $f(\bm{x})$, $\Delta_2f$ is the $l_2$ sensitivity, and i.i.d. Gaussian noise with variance $\sigma^2$ is used to obscure $f(\bm{x})\in\R^M$. Define the position (longitude $\theta$ and latitude $\phi$) of the normalized random triangles  as in \eqref{eq:long-lat}, then the joint PDF of $(\theta,\phi)$ is
    \begin{equation*}
    f(\theta,\phi) = \frac{1}{ \sqrt{2\pi} 2^{\frac{M-1}{2}-1}\Gamma\left(\frac{M-1}{2}\right)} \left(\frac{\sqrt{3}\Delta_2f}{2\sigma}\right)^M  \frac{\cos\phi (1 - \cos\phi \cos\theta)^{M-1}}{\sin^{M+1}\phi}   \exp\left( - \frac{(\Delta_2f)^2}{8\sigma^2} \Phi(\theta, \phi) \right),
\end{equation*}
where $\Phi(\theta, \phi)$ is  given by 
\begin{equation*}
    \Phi(\theta, \phi) = \frac{3(1 + \cos^2\phi - 2\cos\phi\cos\theta) - 2\sqrt{3}\cos\phi\sin\phi\sin\theta}{\sin^2\phi} + 1.
\end{equation*}
\end{theorem}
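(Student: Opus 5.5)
The plan is a change of variables from $(n_\parallel,n_\perp)$ to $(\theta,\phi)$, using the inverse map of Proposition~\ref{prop:diffeomorphism}.

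\textbf{Step 1: the source density.} Write $D=\Delta_2f$ and work in the worst case $\|\bm{v}\|_2=D$. By spherical symmetry of $\mathcal{N}(\mathbf{0},\sigma^2\mathbf{I}_M)$, the two components are independent:
\begin{itemize}
\item $n_\parallel\sim\mathcal{N}(0,\sigma^2)$;
\item $n_\perp/\sigma\sim\chi_{M-1}$.
\end{itemize}
Their joint density is therefore
\[
g(n_\parallel,n_\perp)=\frac{1}{\sqrt{2\pi}\sigma}e^{-n_\parallel^2/(2\sigma^2)}\cdot\frac{n_\perp^{M-2}}{2^{\frac{M-1}{2}-1}\sigma^{M-1}\Gamma\!\left(\frac{M-1}{2}\right)}e^{-n_\perp^2/(2\sigma^2)} .
\]
The normalized triangle, and hence $(\theta,\phi)$, depends on $\mathbf{n}$ only through $(n_\parallel,n_\perp)$. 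So $f(\theta,\phi)=g\bigl(n_\parallel(\theta,\phi),n_\perp(\theta,\phi)\bigr)\,|J|$, where $J=\partial(n_\parallel,n_\perp)/\partial(\theta,\phi)$ comes from the inverse formulas in \eqref{eq:diffeomorphism}. Because that proposition gives a diffeomorphism, no multiplicity sum is needed.

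\textbf{Step 2: the power factor and the exponent.} Substitute $n_\perp=\frac{\sqrt3 D}{2}\frac{1-\cos\phi\cos\theta}{\sin\phi}$. The factor $n_\perp^{M-2}$ becomes $(\sqrt3D/2)^{M-2}(1-\cos\phi\cos\theta)^{M-2}/\sin^{M-2}\phi$.

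For the exponent, expand $n_\parallel^2+n_\perp^2$ with $n_\parallel=\frac{\sqrt3D}{2}\cot\phi\sin\theta-\frac D2$. This gives $\frac{D^2}{4}\bigl[3\cot^2\phi\sin^2\theta-2\sqrt3\cot\phi\sin\theta+1+3(1-\cos\phi\cos\theta)^2/\sin^2\phi\bigr]$. Using $\cos^2\phi\sin^2\theta+(1-\cos\phi\cos\theta)^2=1+\cos^2\phi-2\cos\phi\cos\theta$, this collapses to $\frac{D^2}{4}\Phi(\theta,\phi)$. The exponent is then $-\frac{D^2}{8\sigma^2}\Phi$, as claimed.

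\textbf{Step 3: the Jacobian (main obstacle).} Write $c=\sqrt3D/2$. The partial derivatives are:
\begin{itemize}
\item $\partial_\theta n_\parallel=c\cot\phi\cos\theta$;
\item $\partial_\phi n_\parallel=-c\sin\theta/\sin^2\phi$;
\item $\partial_\theta n_\perp=c\cos\phi\sin\theta/\sin\phi$;
\item $\partial_\phi n_\perp=c\,\frac{\sin^2\phi\cos\theta-\cos\phi(1-\cos\phi\cos\theta)}{\sin^2\phi}=c\,\frac{\cos\theta-\cos\phi}{\sin^2\phi}$.
\end{itemize}
The determinant is
\[
\frac{c^2}{\sin^3\phi}\bigl[\cos\phi\cos\theta(\cos\theta-\cos\phi)+\cos\phi\sin^2\theta\bigr]=\frac{c^2\cos\phi(1-\cos\phi\cos\theta)}{\sin^3\phi}.
\]
This is nonnegative on the upper hemisphere, since $\phi\in[0,\pi/2]$ makes $\cos\phi\ge0$ and $1-\cos\phi\cos\theta\ge0$.

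\textbf{Step 4: assembly and checks.} Multiplying, the powers combine as follows:
\begin{itemize}
\item of $(1-\cos\phi\cos\theta)$: exponent $(M-2)+1=M-1$;
\item of $1/\sin\phi$: exponent $(M-2)+3=M+1$;
\item of $c$: exponent $(M-2)+2=M$, and together with $\sigma^{-1}\sigma^{-(M-1)}=\sigma^{-M}$ this gives $\bigl(\frac{\sqrt3D}{2\sigma}\bigr)^M$.
\end{itemize}
The remaining constants $\frac{1}{\sqrt{2\pi}}$ and $\frac{1}{2^{\frac{M-1}{2}-1}\Gamma\left(\frac{M-1}{2}\right)}$ match the stated formula.

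The step I expect to need most care is verifying that the inverse map is globally one-to-one onto the $(\theta,\phi)$ domain, including the branch of $\theta\in[0,2\pi)$. I would rely on Proposition~\ref{prop:diffeomorphism} and the remark after it, which resolves $\theta$ from the signs of its sine and cosine. I would also check the orientation sign of the Jacobian; if it comes out negative, taking the absolute value fixes it. The trigonometric simplifications in Step 2 are routine but error-prone, so I would re-verify $\Phi$ at a test point such as $n_\parallel=0$.
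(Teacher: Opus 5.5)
Your proposal is correct given Proposition~\ref{prop:diffeomorphism} and is essentially the paper's own proof: the same $\mathcal{N}(0,\sigma^2)\times\sigma\chi_{M-1}$ source density, the same inverse map, the same Jacobian $\frac{3(\Delta_2f)^2}{4}\frac{\cos\phi(1-\cos\phi\cos\theta)}{\sin^3\phi}$, the same identity $n_\parallel^2+n_\perp^2=\frac{(\Delta_2f)^2}{4}\Phi(\theta,\phi)$, and the same bookkeeping of powers. One caution applies equally to the paper: that proposition reads $\theta$ off the spatial scatter matrix $\mathbf{Y}\mathbf{Y}^\top$ rather than the Helmert-frame Gram matrix $\mathbf{H}^\top\mathbf{Y}^\top\mathbf{Y}\mathbf{H}$ that defines $\theta=2\alpha$ in \eqref{eq:long-lat}; the two matrices share eigenvalues, so $\phi$ is unaffected, but not eigenvectors (at $(n_\parallel,n_\perp)=(0,\Delta_2f)$ they give longitudes $\pi/2$ and $\pi$), so by using that proposition as a black box you, like the paper, are really computing the density of the former longitude.
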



Similar to the first geometric tool, we also provide a corollary (proved in Appendix~\ref{app:sphere-plrv-proof}) that connects the traditional PLRV to a PLRV supported on the hemisphere. 
\begin{corollary}\label{corollary:sphere-plrv} The PLRV associated with the classic Gaussian mechanism in (\ref{eq:plrv_reduction}) can be reconstructed by a random longitude and latitude   $(\theta,\phi)$ generated by the PDF in Theorem~\ref{thm:spherepdf} via
\begin{equation*}
    \text{PLRV}(\theta, \phi) 
    = -\frac{\sqrt{3}(\Delta_2f)^2}{2\sigma^2} \cot\phi \sin\theta.
\end{equation*}
\end{corollary}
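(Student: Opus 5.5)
The plan is to express the classical PLRV in terms of $n_\parallel$ and then substitute the inverse map from Proposition~\ref{prop:diffeomorphism}. By \eqref{eq:v-plrv},
\[
\mathrm{PLRV}=\frac{\langle \mathbf{n},\bm{v}\rangle}{\sigma^2}+\frac{\|\bm{v}\|_2^2}{2\sigma^2}.
\]
Taking the worst-case pair with $\|\bm{v}\|_2=\Delta_2 f$, as in \eqref{eq:plrv_reduction}, and using $n_\parallel=\langle\mathbf{n},\bm{v}\rangle/\|\bm{v}\|$, this becomes
\[
\mathrm{PLRV}=\frac{\Delta_2 f\, n_\parallel}{\sigma^2}+\frac{(\Delta_2 f)^2}{2\sigma^2}=\frac{\Delta_2 f}{\sigma^2}\Big(n_\parallel+\frac{\Delta_2 f}{2}\Big).
\]
So the privacy loss depends only on the shifted parallel component $n_\parallel+\Delta_2 f/2$. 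This is the signed distance of the noise endpoint from the perpendicular bisector of the sensitivity edge, which is the natural symmetric quantity here.

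Next I would substitute the inverse map $n_\parallel=\frac{\sqrt3\Delta_2 f}{2}\cot\phi\sin\theta-\frac{\Delta_2 f}{2}$ from \eqref{eq:diffeomorphism}. The $-\Delta_2 f/2$ cancels the shift, so
\[
\mathrm{PLRV}=\frac{\sqrt3(\Delta_2 f)^2}{2\sigma^2}\cot\phi\sin\theta
\]
up to sign. Because the map $(\theta,\phi)\mapsto(n_\parallel,n_\perp)$ is a diffeomorphism on the relevant domain, and Theorem~\ref{thm:spherepdf} is exactly the pushforward density, the random variable obtained this way has the same law as the classical PLRV, namely $\mathcal N\big((\Delta_2f)^2/2\sigma^2,(\Delta_2f)^2/\sigma^2\big)$. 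This is the sense of ``reconstructed'' used in Corollary~\ref{corollary:new-plrv}.

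The main obstacle is the sign. The stated formula carries a minus sign, but direct substitution gives a plus. I would resolve this through orientation conventions. The PLRV can be taken as $\log(dP/dQ)$ or $\log(dQ/dP)$. Equivalently, one can use $\mathbf{n}'=\mathbf{n}+\bm v$ as the reference noise, or flip $\bm v\mapsto -\bm v$; any of these flips the sign of the $\langle\mathbf n,\bm v\rangle$ term. Alternatively the orientation of $\alpha$ in the SVD \eqref{eq:unique-svd} sends $\theta\mapsto-\theta$.

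I would fix the convention by checking a single configuration, the equilateral case $n_\parallel=-\Delta_2f/2$, where the loss should be zero. I would then recheck, via Proposition~\ref{prop:bijective-mapping}, the sign of $2\mathbf G_{12}=\cos\phi\sin\theta$ as a function of $n_\parallel$. Concretely, I would compute $\mathbf G=\mathbf H^\top\widetilde\triangle^\top\widetilde\triangle\mathbf H$ with edges $\bm v$, $\mathbf n$, $-\bm v-\mathbf n$ and read off whether $\sin\theta\cot\phi$ is proportional to $+(n_\parallel+\Delta_2f/2)$ or to its negative. This determines whether the paper's minus sign corresponds to the log-ratio in \eqref{eq:v-plrv} or to its reverse. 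Since the PLRV distribution is symmetric under this relabeling of neighbors, either convention yields the same DP guarantee.
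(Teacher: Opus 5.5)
Your computation takes the same route as the paper's proof: write the loss through $n_\parallel$, then substitute the inverse map of Proposition~\ref{prop:diffeomorphism}. It is correct. From \eqref{eq:sinphi} and \eqref{eq:alg_sin_theta} directly, $\cot\phi\sin\theta=(\Delta_2f+2n_\parallel)/(\sqrt{3}\,\Delta_2f)$, so \eqref{eq:v-plrv} gives $\mathrm{PLRV}=+\frac{\sqrt{3}(\Delta_2f)^2}{2\sigma^2}\cot\phi\sin\theta$.

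The paper gets the minus sign only because its proof starts from $\mathrm{PLRV}=-\frac{1}{2\sigma^2}\big(\|\bm{v}\|^2+2\langle\mathbf{n},\bm{v}\rangle\big)$. That is the negative of \eqref{eq:v-plrv}, and the same flip occurs in the proof of Corollary~\ref{corollary:new-plrv}. The quantity is $\log(\mathrm{d}Q/\mathrm{d}P)$ at $\mathbf{s}=f(\bm{x})+\mathbf{n}$. With $\mathbf{n}$ centered it has law $\mathcal{N}\big(-(\Delta_2f)^2/(2\sigma^2),(\Delta_2f)^2/\sigma^2\big)$, not the law in \eqref{eq:plrv_reduction}. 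So the discrepancy you found is a sign error in the paper, not a missing step in your argument. Your argument establishes the ``$+$'' identity, and you should report it that way rather than try to recover the stated sign.

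What does not hold up is your reconciliation paragraph.
\begin{itemize}
\item Flipping only the $\langle\mathbf{n},\bm{v}\rangle$ term (e.g.\ $\bm{v}\mapsto-\bm{v}$ inside the loss) gives $\frac{(\Delta_2f)^2}{\sigma^2}-\frac{\sqrt{3}(\Delta_2f)^2}{2\sigma^2}\cot\phi\sin\theta$, which is not the stated formula; the whole loss has to be negated.
\item Negating the whole loss does reproduce the stated formula, but then your claim that the result has law $\mathcal{N}\big((\Delta_2f)^2/(2\sigma^2),(\Delta_2f)^2/\sigma^2\big)$ is false. Theorem~\ref{thm:spherepdf} is the law of the triangle built from centered $\mathbf{n}$, i.e.\ from outputs of $P$. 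The neighbor symmetry you invoke concerns $\log(\mathrm{d}Q/\mathrm{d}P)$ evaluated at outputs of $Q$.
\item A reflection $\theta\mapsto-\theta$ would give the stated sign with the correct loss. However, it contradicts the orientation fixed in \eqref{eq:diffeomorphism} and flips the $\sin\theta$ term of $\Phi$.
\item Neither sign check can settle the question. At $n_\parallel=-\Delta_2f/2$ both signs give $0$. Computing $\mathbf{G}=\mathbf{H}^\top\widetilde{\triangle}^\top\widetilde{\triangle}\mathbf{H}$ literally as in Proposition~\ref{prop:bijective-mapping} gives, before normalization, $2\mathbf{G}_{12}=\sqrt{3}(\|\bm{v}\|^2-\|\mathbf{n}\|^2)$ and $\mathbf{G}_{11}-\mathbf{G}_{22}=\|\bm{v}\|^2+\|\mathbf{n}\|^2-2\|\mathbf{n}+\bm{v}\|^2$. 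In that convention $\cot\phi\sin\theta=(\|\bm{v}\|^2-\|\mathbf{n}\|^2)/(2\|\bm{v}\|\,n_\perp)$, which is not a function of $n_\parallel$ alone.
\end{itemize}
The formulas of Proposition~\ref{prop:diffeomorphism}, which both you and the paper use, come instead from the left Gram matrix $\mathbf{Y}\mathbf{Y}^\top$ in the frame with $\bm{v}$ on the first axis, which defines a different longitude. The sign must be read off from those formulas, and they give ``$+$''.
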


\begin{remark}
Proposition~\ref{corollary:sphere-plrv} reveals that privacy loss is  purely governed by      sine of the longitude and the     cotangent of the latitude. Similar to Corollary~\ref{corollary:new-plrv}, $\text{PLRV}(\theta, \phi)$ is   another reparameterization of the classic PLRV in \eqref{eq:plrv_reduction}. Thus, the two PLRVs should exhibit  the same tail probability bound. In other words, performing measure concentration analysis on $f(\theta,\phi)$ will lead to the same DP guarantee with the Gaussian mechanism. 
\end{remark}


The hemisphere representation provides the spectral counterpart of the
simplex collapse established in Section~\ref{sec:task1.1}. Whereas the simplex
coordinates describe the relative squared edge lengths of the normalized
triangle, the latitude on the hemisphere describes the balance between
its two nonzero singular values. Proposition~\ref{prop:hemisphere-force}, proved in Appendix~\ref{app:hemisphere-force}, characterizes the asymptotic
behavior of this spectral shape as the ambient dimension increases.

\begin{prop}[Equatorial Drift and  Band Concentration]\label{prop:hemisphere-force}
 Consider the hemisphere representation of the Gaussian mechanism, with spherical coordinates \((\theta,\phi)\) defined by \eqref{eq:long-lat} and joint density \(f(\theta,\phi)\) given by Theorem~\ref{thm:spherepdf}. As the dimension \(M\) increases, the random point cloud on the hemisphere exhibits the following two asymptotic effects:
\begin{enumerate}
    \item \textbf{Equatorial drift.}
    The dominant mass moves toward lower latitudes, i.e., toward the equator, at polynomial rate \(M^{-1/2}\). 
    \item \textbf{Band concentration.}
    At the same time, the point cloud becomes increasingly compressed into a thin band around the equator.
\end{enumerate}   
\end{prop}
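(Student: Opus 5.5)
The plan is to avoid working with the density $f(\theta,\phi)$ of Theorem~\ref{thm:spherepdf} directly. Instead, we push forward the simple law of $(n_\parallel,n_\perp)$ through the diffeomorphism of Proposition~\ref{prop:diffeomorphism}. Write $D=\Delta_2 f$. Then $n_\parallel\sim\mathcal N(0,\sigma^2)$ and $n_\perp\sim\sigma\chi_{M-1}$, and the two are independent. This independence is already implicit in the structure of $f(\theta,\phi)$, whose Gaussian factor and $\chi$-type factor correspond to these two components. From \eqref{eq:diffeomorphism} we have the exact identity
\begin{equation*}
\sin\phi=\frac{\sqrt3\,D\,n_\perp}{D^2+Dn_\parallel+n_\parallel^2+n_\perp^2}.
\end{equation*}
Both claimed effects then reduce to asymptotics of this explicit ratio in $M$.

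\textbf{Step 1 (typical scale).} Use standard $\chi$ concentration:
\[
n_\perp=\sigma\sqrt{M-1}\,\bigl(1+O_P(M^{-1/2})\bigr),
\]
and in fact $n_\perp-\sigma\sqrt{M-1}=O_P(1)$. We also have $n_\parallel=O_P(\sigma)$, uniformly in $M$. Substituting, the denominator equals $\sigma^2(M-1)\bigl(1+O_P(M^{-1/2})\bigr)$, where the $n_\parallel$ and $D$ terms contribute only $O_P(1)$. Hence
\begin{equation*}
\sin\phi=\frac{\sqrt3\,D}{\sigma\sqrt{M-1}}\bigl(1+O_P(M^{-1/2})\bigr),
\qquad
\phi=\frac{\sqrt3\,D}{\sigma\sqrt{M}}+o_P(M^{-1/2}),
\end{equation*}
because $\arcsin u\sim u$ as $u\to0$. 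This is the equatorial drift at rate $M^{-1/2}$. To phrase the result in terms of the dominant mass, we confirm it against the density. Writing $f(\theta,\phi)$ near $\phi\to0$, the factor $(1-\cos\phi\cos\theta)^{M-1}/\sin^{M+1}\phi$ together with $\exp(-\tfrac{3D^2}{8\sigma^2}\cdot(\cdot)/\sin^2\phi)$ produces, after marginalizing over $\theta$ or maximizing in $\phi$, a log-density of the form $-(M+1)\log\phi-c/\phi^2+\dots$. Setting its derivative to zero gives a mode at $\phi^*\asymp\sqrt{2c/M}\propto D/(\sigma\sqrt M)$, which matches the probabilistic computation.

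\textbf{Step 2 (band concentration).} We need the relative fluctuation of $\phi$ around $\phi^*$. By the delta method applied to $n_\perp\mapsto \sqrt3 D n_\perp/(n_\perp^2+O_P(1))\approx\sqrt3D/n_\perp$, combined with $\mathrm{Var}(\chi_{M-1})\to1/2$, we get
\[
\mathrm{sd}(\phi)\asymp \frac{D}{\sigma M}.
\]
The width of the band is therefore $O(M^{-1})$, while its location is $\Theta(M^{-1/2})$. Consequently $\phi/\phi^*\to1$ in probability, and since both quantities shrink, the point cloud lies in an increasingly thin band $\{\phi\in[\phi^*(1-\eta),\phi^*(1+\eta)]\}$ of vanishing absolute and relative width. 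We will also note, using the $\tan\theta$ formula, that for large $n_\perp$ the denominator is $\approx -n_\perp^2$ and the numerator is $\approx n_\perp(D+2n_\parallel)$. So $\theta$ concentrates near $\pi$, up to $O_P(M^{-1/2})$ corrections, with sign resolution by the Remark. This explains the crescent and band shape but is not needed for the claim.

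\textbf{Main obstacle.} The hardest part is making "dominant mass" rigorous, because $f(\theta,\phi)$ is a joint density with a $\theta$-dependent exponent $\Phi$. A direct Laplace-type analysis in $\phi$ must handle $\Phi(\theta,\phi)\sin^2\phi\to 3(1-\cos\theta)^2+\dots$ uniformly in $\theta$. I would sidestep this by stating and proving everything in terms of convergence in probability via the $(n_\parallel,n_\perp)$ representation, which is exact. The density computation would serve only as a consistency check on the location of the mode.
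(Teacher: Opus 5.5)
Your proposal is correct and follows essentially the same route as the paper. Both arguments push the independent pair $n_\parallel\sim\mathcal N(0,\sigma^2)$, $n_\perp\sim\sigma\chi_{M-1}$ through the exact identity for $\sin\phi$ from Proposition~\ref{prop:diffeomorphism}, and use $\chi$-concentration to obtain $\sin\phi=\frac{\sqrt3\,\Delta_2 f}{\sigma\sqrt M}\bigl(1+O_p(M^{-1/2})\bigr)$, which yields both the $M^{-1/2}$ drift and the band concentration. Your delta-method width $\mathrm{sd}(\phi)\asymp\Delta_2 f/(\sigma M)$, the mode check against $f(\theta,\phi)$, and the observation that $\theta$ concentrates near $\pi$ are sharper additions that the paper does not include but that are consistent with its proof.
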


\begin{remark}
The proposition distinguishes the location and width of the
high-dimensional hemisphere distribution. Equatorial drift describes the
decrease of its typical latitude, while band concentration describes the
shrinking variation around that typical latitude. Geometrically, the
normalized triangles approach the rank-one boundary of the shape space, i.e., 
the second singular value becomes small relative to the first, and hence $\sigma_1\sigma_2
    =
    \frac12\sin\phi
    \longrightarrow0$.
\end{remark}

\noindent\textbf{Geometric interpretation.} The simplex and hemisphere limits describe the same high-dimensional
degeneration in two complementary coordinate systems. In the simplex
representation, 
    $(x,y,z)
    \longrightarrow
    \left(0,\frac12,\frac12\right)$, 
which means that the sensitivity edge becomes negligible relative to the
two noise-related edges after Frobenius normalization. In the hemisphere
representation, 
    $\phi\longrightarrow0$, 
which means that the normalized triangle approaches the rank-one boundary
of spectral shape space. Although the scalar PLRV can be reconstructed from either
representation, the two geometric distributions describe different
aspects of the random triangle shapes associated with the Gaussian
mechanism.

\section{Conclusion}\label{sec:conclusion}

We studied the classical Gaussian mechanism through the random triangles
induced by its perturbation instances. We introduced two complementary
geometric representations. The simplex representation characterizes the
distribution of normalized squared edge lengths, while the hemisphere
representation characterizes spectral shape through the two nonzero singular
values. For both representations, we derived
the induced probability densities and showed how the classical scalar PLRV
can be reconstructed from the corresponding geometric coordinates. Our
high-dimensional analysis further establishes a metric collapse toward
$(0,\frac12,\frac12)$ on the simplex and a corresponding equatorial drift
and band concentration on the hemisphere.

These results establish a probabilistic bridge between differential privacy
and random shape theory, providing two exact geometric coordinate systems for
the classical Gaussian mechanism. More broadly,
they connect DP with random shape analysis and illustrate
how a scalar PLRV can be complemented by distributions
over geometric objects. Natural directions for future work include extending
the framework to anisotropic Gaussian and non-Gaussian perturbations,
incorporating the geometry of sensitivity sets, and investigating whether
geometric representations can support mechanism comparison, privacy--utility
diagnostics, or geometry-aware noise design.

\newpage
\bibliographystyle{unsrtnat}
\bibliography{sample.bib}
\newpage

\appendix

\section{Omitted Proofs}\label{app:proofs}

\subsection{Proof of Theorem~\ref{thm:simplex-density}}\label{app:simplex-proof}

As visualized in Figure~\ref{fig:geo-dp},
we first decompose the noise vector $\mathbf{n}$ into two parts, one is parallel to $\boldsymbol{v}$ and the other is orthogonal to $\boldsymbol{v}$, i.e., 
\begin{equation*}
   \mathbf{n} = a \frac{\boldsymbol{v}}{\|\boldsymbol{v}\|}+\boldsymbol{\omega}. 
\end{equation*}
 Denote $\|\boldsymbol{\omega}\|_2^2 \triangleq r^2$, we have $a\sim\mathcal{N}(0,\sigma^2)$, $r^2\sim \sigma^2 \chi^2_{M-1}$ (a chi squared random variable with $M-1$ degree of freedom scaled by $\sigma^2$), and $a \bot r^2$ ($\bot$ indicates independence). Then, we have 
\begin{equation}\label{eq:xyzS}
\left\{
\begin{aligned}
\|\boldsymbol{e}_1\|^2+\|\boldsymbol{e}_2\|^2+\|\boldsymbol{e}_3\|^2 
\triangleq S  &
=  \|\boldsymbol{v}\|^2 +\|\mathbf{n}\|^2 + \|\mathbf{n}+\boldsymbol{v}\|^2 \\
& = 2\left(a^2+(\Delta_2 f)^2+a\Delta_2 f+r^2\right),\\
\|\boldsymbol{e}_1\|^2 \triangleq x &
= \frac{\|\boldsymbol{v}\|^2}{S} 
= \frac{(\Delta_2 f)^2}{S}, \\
\|\boldsymbol{e}_2\|^2 \triangleq y &
= \frac{\|\mathbf{n}\|^2}{S} 
= \frac{a^2 + r^2}{S}, \\
\|\boldsymbol{e}_3\|^2 \triangleq z &
= \frac{\|\mathbf{n}+\boldsymbol{v}\|^2}{S} \\
&= \frac{a^2+(\Delta_2 f)^2+2a\Delta_2 f+r^2}{S}\\
& = 1-x-y
\end{aligned}
\right.
\end{equation}
For a given tuple  $(x,y)$, we solve for $a$ and $r^2$ and have the following inverting mapping
    \begin{equation*}
\left\{
\begin{aligned}
a \triangleq p(x,y) = & \frac{\Delta_2f (1-2x-2y)}{2x}\\
r^2 \triangleq q(x,y) = & \frac{\left(\Delta_2f\right)^2 y}{x}-\big(p(x,y)\big)^2
\end{aligned}.
\right.
\end{equation*}
Then, we can  obtain the joint PDF of $(x,y)$ by change of variables, i.e., 
$$f_{X,Y}(x,y)
= f_{A,R^2}\!\big(p(x,y),q(x,y)\big)\,
\left|\det\frac{\partial(p,q)}{\partial(x,y)}\right|,$$ where the 
Jacobian determinant is 
\begin{equation*}
J(x,y)
=
\det
\begin{bmatrix}
\partial p/\partial x & \partial p/\partial y\\[2pt]
\partial q/\partial x & \partial q/\partial y
\end{bmatrix}
= p_x q_y - p_y q_x.
\end{equation*}
One can verify that 
\begin{align*}
p_x &= \frac{\Delta_2 f}{2}\cdot\frac{2y-1}{x^{2}}, &
p_y &= -\,\frac{\Delta_2 f}{x}.
\end{align*}
Since $q(x,y)=\tfrac{(\Delta_2 f)^2 y}{x} - p^2$, we have
\begin{align*}
q_x &= -\,\frac{(\Delta_2 f)^2 y}{x^{2}} - 2p\,p_x, &
q_y &=  \frac{(\Delta_2 f)^2}{x} - 2p\,p_y.
\end{align*}
It is easy to check that $\left|\det\frac{\partial(p,q)}{\partial(x,y)}\right|
= \frac{(\Delta_2 f)^3}{2x^{3}}$. Hence, using the fact that $A\bot R^2$,   the joint density of $(x,y)$ is
\begin{equation*}
\begin{aligned}
f_{X,Y}(x,y)
&= f_{A}\big(p(x,y)\big)\times f_{R^2}\big(q(x,y)\big)\,
\frac{(\Delta_2 f)^3}{2x^{3}}\\
&=\frac{(\Delta_2 f)^3}{2x^3}
\cdot
\frac{1}{\sqrt{2\pi}\,\sigma}
\exp\!\left[-\frac{1}{2\sigma^2}p(x,y)^2\right]
\\[6pt]
&\quad\times
\frac{\big(q(x,y)\big)^{\frac{M-3}{2}}}
{2^{\frac{M-1}{2}}
\sigma^{M-1}
\Gamma\!\left(\frac{M-1}{2}\right)}
\exp\!\left[-\frac{1}{2\sigma^2}q(x,y)\right] \\
&\text{for all } x>0,\;y>0,\;x+y<1,\;q(x,y)\ge0
\end{aligned},
\end{equation*}
where $p(x,y)$ and $q(x,y)$ are given in (\ref{eq:p-q}). This completes the proof.

\subsection{Proof of Corollary  \ref{corollary:new-plrv}}\label{proof-simplex-plrv}
\begin{proof} First, by leveraging $x,y,z$, and $S$ defined in (\ref{eq:xyzS}), 
we have
\begin{equation*}
    \begin{aligned}
\langle \bm{v},\mathbf{n}\rangle   = &\frac{S}{2} \frac{\|\bm{v}+\mathbf{n}\|^2 - \|\bm{v}\|^2 -\|\mathbf{n}\|^2 }{S}     \\
= & \frac{S}{2} (z-x-y)\\
= & \frac{S}{2} \bigl((1-x-y)-x-y\bigr)\\
= &  \frac{1}{2}\frac{\left(\Delta_2f\right)^2}{x} (1-2x-2y)
    \end{aligned}
\end{equation*}
Then by plugging the above equation into (\ref{eq:plrv_reduction}), we arrive at
\begin{equation*}
\begin{aligned}
\mathrm{PLRV} =  & -\frac{1}{2\sigma^2} \left(\|\bm{v}\|^2+2\langle \mathbf{n},\bm{v}\rangle\right)\\ =&  -\frac{1}{2\sigma^2} \left(\left(\Delta_2f\right)^2+\frac{\left(\Delta_2f\right)^2}{x} (1-2x-2y)\right)\\
=& -\frac{\left(\Delta_2f\right)^2}{2\sigma^2}  \left(1+\frac{1}{x} (1-2x-2y)\right) \\= & \frac{\left(\Delta_2f\right)^2}{2\sigma^2} \cdot \frac{x+2y-1}{x} \triangleq L(x,y),
\end{aligned}
\end{equation*}
which concludes the proof.
\end{proof}

\subsection{Proof of Proposition~\ref{prop:phase_transition}}\label{app:proof-phase_transition}

\begin{proof}
The conclusion follows directly from the structure of the joint density
\eqref{eq:jointpdf}. Up to a normalization constant,
\[
f_{X,Y}(x,y)
\propto
x^{-3}
\exp\!\left(-\frac{p(x,y)^2}{2\sigma^2}\right)
q(x,y)^{\frac{M-3}{2}}
\exp\!\left(-\frac{q(x,y)}{2\sigma^2}\right).
\]
Among these factors, the only explicit $M$-dependence appears in the power $q(x,y)^{\frac{M-3}{2}}$.
Yet, the dependence of the density on the scalar variable $q(x,y)$ is governed by the full factor $q(x,y)^{\frac{M-3}{2}}
\exp\!\left(-\frac{q(x,y)}{2\sigma^2}\right)$, which is the kernel of a  Gamma distribution.
 Hence, for
large $M$, the dominant contribution to the density comes from the region where
$q(x,y)$ takes its typical high-dimensional scale, i.e., $q(x,y)=\Theta(M)$ (as shown in Appendix~\ref{app:simplex-proof}, $q(x,y)\sim\chi^2_{M-1}$). 

In contrast, since $p(x,y)\sim\mathcal{N}(0,\sigma^2)$ which is independent of $M$ (also shown in Appendix~\ref{app:simplex-proof}), 
 the dominant mass of $p(x,y)$
must remain in the regime where $p(x,y)=O(1)$. Then, according to \eqref{eq:p-q}, it implies that $1-2x-2y = O(x)$, or equivalently,
$y=\frac12+O(x)$.

Substituting this relation into $q(x,y)$   
and using $q(x,y)=\Theta(M)$ together with $p(x,y)=O(1)$, we obtain $\frac{(\Delta_2 f)^2 y}{x}=\Theta(M)$. 
Since $y$ stays bounded away from $0$ and converges to $1/2$, this yields $x=\Theta\!\left(\frac{1}{M}\right)$. Consequently, $y=\frac12+O\!\left(\frac{1}{M}\right)$. Because $z=1-x-y$, we also have $z=\frac12+O\!\left(\frac{1}{M}\right)$.

Additionally, recall \eqref{eq:p-q}
    \begin{equation*}
\left\{
\begin{aligned}
a \triangleq p(x,y) = & \frac{\Delta_2f (1-2x-2y)}{2x}\\
r^2 \triangleq q(x,y) = & \frac{\left(\Delta_2f\right)^2 y}{x}-\big(p(x,y)\big)^2
\end{aligned},
\right.
\end{equation*}
and use the fact $r^2>0$, we have $\frac{\left(\Delta_2f\right)^2 y}{x}> \left( \frac{\Delta_2f (1-2x-2y)}{2x}\right)^2$, which implies   
\begin{equation*}
    \left(x-\frac13\right)^2
    +
    \left(x-\frac13\right)
    \left(y-\frac13\right)
    +
    \left(y-\frac13\right)^2
    <
    \frac1{12},
\end{equation*}
 a rotated ellipse centered at $(1/3,1/3)$.
\end{proof}






\subsection{Proof of Proposition~\ref{prop:bijective-mapping}}
\label{app:prove-bijective}
\begin{proof}
Given \eqref{eq:unique-svd}, we have the   (normalized) Gram matrix $\mathbf{G}$ characterizing the internal geometry of the random triangle as 
\begin{align}\label{eq:gram}
    \mathbf{G} &=(\widetilde{\Delta}\mathbf{H})^\top
    (\widetilde{\Delta}\mathbf{H})
    =
    \mathbf{H}^\top
    \widetilde{\Delta}^\top
    \widetilde{\Delta}
     \mathbf{H}\nonumber\\
    &= \begin{bmatrix} \cos\alpha & -\sin\alpha  \\ \sin\alpha & \cos\alpha \end{bmatrix} \begin{bmatrix} \sigma_1^2 & 0  \\ 0 & \sigma_2^2 \end{bmatrix} \begin{bmatrix} \cos\alpha & \sin\alpha    \\ -\sin\alpha & \cos\alpha   \end{bmatrix} \nonumber \\
    &= \begin{bmatrix} \sigma_1^2 \cos^2\alpha + \sigma_2^2 \sin^2\alpha & (\sigma_1^2 - \sigma_2^2) \sin\alpha \cos\alpha   \\ (\sigma_1^2 - \sigma_2^2) \sin\alpha \cos\alpha & \sigma_1^2 \sin^2\alpha + \sigma_2^2 \cos^2\alpha  \end{bmatrix}.
\end{align}
Because the triangle is strictly normalized by the Frobenius norm (i.e., $\|\widetilde{\triangle}\|_F = 1$), the trace of $\mathbf{G}$ is constrained to $\Tr(\mathbf{G}) = \sigma_1^2 + \sigma_2^2 = 1$.  Note again, $\widetilde{\Delta}\mathbf{H}$ and $\widetilde{\Delta}$ have the same two nonzero
singular values since  $(\widetilde{\Delta}\mathbf{H})(\widetilde{\Delta}\mathbf{H})^\top = \widetilde{\Delta}\widetilde{\Delta}^\top$.

Next, we associate the SVD parameters with the coordinates on the hemisphere of radius $1/2$. Using basic trigonometric identities, we rewrite   elements of the Gram matrix $\mathbf{G}$ strictly in terms of the longitude $\theta$ and latitude $\phi$. First, according to \eqref{eq:long-lat}, $2\sigma_1\sigma_2 = \sin \phi$. Then use
$\sigma_1\geq \sigma_2\geq 0$ and $\sigma_1^2+\sigma_2^2=1$, we have
\begin{equation}\label{eq:cosphi}
    \sigma_1^2 - \sigma_2^2 = \sqrt{(\sigma_1^2 + \sigma_2^2)^2 - 4\sigma_1^2 \sigma_2^2} = \sqrt{1 - \sin^2\phi} = \cos\phi.
\end{equation}
By leveraging \eqref{eq:cosphi} and double-angle formulas for cosine and sine, i.e., 
\begin{equation}\label{eq:double-angular}
    \theta = 2\alpha, \quad \cos \theta = \cos^2\alpha - \sin^2\alpha, \quad \text{and}\ \sin \theta = 2\sin \alpha \cos \alpha,
\end{equation}
the matrix elements of $\mathbf{G}$ in \eqref{eq:gram} map perfectly to the spherical coordinates. In particular, 
\begin{align}
    \mathbf{G}_{11} - \mathbf{G}_{22} &= (\sigma_1^2 - \sigma_2^2)(\cos^2\alpha - \sin^2\alpha) = \cos\phi \cos\theta, \label{eq:coord_1} \\
    2\mathbf{G}_{12} &= (\sigma_1^2 - \sigma_2^2)(2\sin\alpha \cos\alpha) = \cos\phi \sin\theta. \label{eq:coord_2}
\end{align}
Equations \eqref{eq:coord_1} and \eqref{eq:coord_2} establish a  algebraic bijection between the SVD components and the spherical coordinates $(\theta, \phi)$.
\end{proof}

\subsection{Proof of Proposition~\ref{prop:diffeomorphism}}\label{app:diffeomorphism}
Notice that the triangle defined by the vertices $\{\bm{v}, \mathbf{n}, -\bm{v}-\mathbf{n}\}$ lies strictly within the 2D subspace spanned by the   signal $\bm{v}$ and the random noise $\mathbf{n}$. Therefore, the entire intrinsic geometry of the triangle is governed solely by two scalar components of $\mathbf{n}$ projected onto this   2D plane. The first component is  the scalar projection parallel to the signal, i.e.,  
\begin{equation}\label{eq:firstpart} 
n_\parallel = \langle \mathbf{n}, \bm{v} \rangle / \|\bm{v}\| \in \mathbb{R},
\end{equation}
and the second component is the magnitude of the orthogonal rejection, i.e., 
\begin{equation}\label{eq:secondpart} 
n_\perp = \left\|\mathbf{n} - n_\parallel \frac{\bm{v}}{\|\bm{v}\|}\right\|_2 \in \mathbb{R}^+.
\end{equation}
Let $\Delta_2f= \|\bm{v}\|$. In the local 2D coordinates, 
the three vertices of the random triangle can be equivalently represented as Cartesian points 
\begin{equation}
    \bm{y}_1 = \begin{bmatrix} \Delta_2f\\ 0 \end{bmatrix}, \quad 
    \bm{y}_2 = \begin{bmatrix} n_\parallel \\ n_\perp \end{bmatrix}, \quad \text{and} \quad 
    \bm{y}_3 = \begin{bmatrix} -\Delta_2f-n_\parallel \\ -n_\perp \end{bmatrix}.
\end{equation}
We aggregate these column vectors into a $2 \times 3$ raw coordinate matrix $\mathbf{Y} = \begin{bmatrix} \bm{y}_1 & \bm{y}_2 & \bm{y}_3 \end{bmatrix} = \begin{bmatrix} \Delta_2f& n_\parallel & -\Delta_2f-n_\parallel \\ 0 & n_\perp & -n_\perp \end{bmatrix}$. Notice that $\mathbf{Y} \mathbf{1}  = \mathbf{0}$, meaning the centroid of the 2D random triangle is also intrinsically anchored at the origin $(0,0)^T$.

Next, we analyze the unnormalized Gram matrix $\mathbf{G}_{un}\in\R^{2\times 2}$ which encodes all internal geometric features (edge lengths, angles, and area) and is defined as 
\begin{equation}
\begin{aligned}
        \mathbf{G}_{un} \triangleq \mathbf{Y} \mathbf{Y}^T &= \begin{bmatrix} \Delta_2f& n_\parallel & -\Delta_2f-n_\parallel \\ 0 & n_\perp & -n_\perp \end{bmatrix} \begin{bmatrix} \Delta_2f & 0 \\ n_\parallel & n_\perp \\ -\Delta_2f-n_\parallel & -n_\perp \end{bmatrix} \\
        &= \begin{bmatrix} 2\left((\Delta_2f)^2 + (\Delta_2f)n_\parallel + n_\parallel^2\right) & n_\perp(\Delta_2f+ 2n_\parallel) \\ n_\perp(\Delta_2f+ 2n_\parallel)& 2n_\perp^2 \end{bmatrix}.
\end{aligned}
\end{equation}
Observing that the squared Frobenius norm of the unnormalized triangle satisfies 
\begin{equation*}
\begin{aligned}
    \|\mathbf{Y}\|_F^2 &=
      \|\bm{y}_1\|_2^2+\|\bm{y}_2\|_2^2+\|\bm{y}_3\|_2^2\\&=2((\Delta_2f)^2 + \Delta_2fn_\parallel + n_\parallel^2 + n_\perp^2)=
    \Tr(\mathbf{G}_{un}). 
    \end{aligned}
\end{equation*}

Clearly,  the random triangle is normalized by its Frobenius norm to isolate its pure intrinsic shape, the corresponding normalized Gramm matrix (defined in Proposition~\ref{prop:bijective-mapping}) is exactly $\mathbf{G} = \mathbf{G}_{un} / \text{Tr}(\mathbf{G}_{un})$. This rigorous algebraic link allows us to uniquely connect the   SVD singular values to the physical noise projections $n_\parallel$ and $n_\perp$.

Since the normalized Gram matrix in \eqref{eq:gram} is connected to the unnormalized one via 
\begin{equation}\label{eq:gram-ungram}
    \mathbf{G} = \begin{bmatrix}
          \mathbf{G}_{un} / \|\mathbf{Y}\|_F^2 & 0 \\
          0 &0 
    \end{bmatrix}=\begin{bmatrix}
         \mathbf{G}_{un} /\Tr(\mathbf{G}_{un}) & 0 \\
          0 &0  
    \end{bmatrix},
\end{equation}
we have $\mathbf{G}_{11}\mathbf{G}_{22}-\mathbf{G}_{12}\mathbf{G}_{21} = \det\left(\frac{\mathbf{G}_{un}}{\|\mathbf{Y}\|_F^2}\right)$, which suggests $(\sigma_1\sigma_2)^2 = \det\left(\frac{\mathbf{G}_{un}}{\|\mathbf{Y}\|_F^2}\right)$. By leveraging the expression of the latitude in \eqref{eq:long-lat}, we have
\begin{equation}\label{eq:sinphi}
\begin{aligned}
\sin\phi &= 2\sigma_1\sigma_2\\
    &     = 2\sqrt{\det\left(\frac{\mathbf{G}_{un}}{\|\mathbf{Y}\|_F^2}\right)} = 2 \frac{\sqrt{\det(\mathbf{G}_{un})}}{\|\mathbf{Y}\|_F^2} = \frac{\sqrt{3} (\Delta_2f)n_\perp}{\left((\Delta_2f)^2 + (\Delta_2f)n_\parallel + n_\parallel^2 + n_\perp^2\right)}.
\end{aligned}
\end{equation}

Similarly, by leveraging  \eqref{eq:coord_1} and \eqref{eq:coord_2}, we have
\begin{align}
    \cos\phi \cos\theta &= \mathbf{G}_{11} - \mathbf{G}_{22} = \frac{{\mathbf{G}_{un}}_{11}-{\mathbf{G}_{un}}_{22}}{\|\mathbf{Y}\|_F^2}= \frac{(\Delta_2f)^2 + (\Delta_2f) n_\parallel + n_\parallel^2 - n_\perp^2}{(\Delta_2f)^2 + (\Delta_2f) n_\parallel + n_\parallel^2 + n_\perp^2},
\label{eq:alg_cos_theta} \\
    \cos\phi \sin\theta &=   2\mathbf{G}_{12}= \frac{2{\mathbf{G}_{un}}_{12}}{\|\mathbf{Y}\|_F^2} = \frac{(\Delta_2f) n_\perp + 2 n_\parallel n_\perp}{(\Delta_2f)^2 + (\Delta_2f) n_\parallel + n_\parallel^2 + n_\perp^2}. \label{eq:alg_sin_theta}    
\end{align} 
Dividing \eqref{eq:alg_sin_theta} by \eqref{eq:alg_cos_theta} yields $\tan\theta$.

Combine \eqref{eq:sinphi}, \eqref{eq:alg_cos_theta}, and \eqref{eq:alg_sin_theta}, one can verify that \eqref{eq:firstpart} and \eqref{eq:secondpart} can be parameterized as
\begin{align*}
    n_\parallel(\theta, \phi) &= \frac{\sqrt{3}\Delta_2f}{2} \cot\phi \sin\theta - \frac{\Delta_2f}{2}, \\
    n_\perp(\theta, \phi) &= \frac{\sqrt{3}\Delta_2f}{2} \frac{1 - \cos\phi \cos\theta}{\sin\phi}.
\end{align*}
Note that 
$n_\perp(\theta, \phi)$ is obtained by solving  a quadratic function w.r.t. $n_{\perp}$, i.e., 
\begin{equation*}
    n_{\perp}^2 + \frac{\sqrt{3}(\Delta_2f)\cos \phi \cos \theta}{\sin \phi}n_{\perp}-\frac{3(\Delta_2f)^2}{4}(1+\cot^2\phi\sin^2\theta) = 0,
\end{equation*}
and we only select the   physically meaningful root.

\subsection{Proof of Theorem~\ref{thm:spherepdf}}\label{app:proof-spherepdf}

To derive the joint PDF of the longitude and latitude, $(\theta, \phi)$, we must trace the spherical coordinates back to the original $M$-dimensional noise vector $\mathbf{n} \sim \mathcal{N}(\mathbf{0}, \sigma^2 \mathbf{I}_M)$.  Since the $M$-dimensional noise $\mathbf{n}$ is isotropic Gaussian, its projected components are independent, with $n_\parallel \sim \mathcal{N}(0, \sigma^2)$ and $n_\perp \sim \sigma\chi_{M-1}$. The joint PDF in the noise domain is inherently proportional to the product of $f(n_\parallel)$ and $f(n_\perp)$, hence
\begin{equation}
\begin{aligned}
      f(\mathbf{n}) & = f_{\text{noise}}(n_\parallel, n_\perp) =f(n_\parallel)\cdot f( n_\perp)\\
      &=\frac{1}{\sqrt{2\pi}\sigma}\exp\left(-\frac{n_\parallel^2}{2\sigma^2}\right)\cdot \frac{1}{2^{\frac{M-1}{2}-1}\Gamma\left(\frac{M-1}{2}\right)\sigma^{M-1}}n_\perp^{M-2} \exp\left(-\frac{n_\perp^2}{2\sigma^2}\right) \\&
      = \frac{1}{ \sqrt{2\pi}\sigma 2^{\frac{M-1}{2}-1}\Gamma\left(\frac{M-1}{2}\right)\sigma^{M-1}} n_\perp^{M-2} \exp\left(-\frac{n_\parallel^2 + n_\perp^2}{2\sigma^2}\right).
\end{aligned}
\end{equation}

Let $\Psi : (n_\parallel, n_\perp) \mapsto (\theta, \phi)$ denote the forward algebraic mapping to the tuple of the longitude and latitude. To evaluate the joint PDF of $(\theta,\phi)$, we apply  the change of variables theorem, i.e.,
\begin{equation}\label{eq:pdf-theta-phi}
\begin{aligned}
        f(\theta, \phi) &= f_{\text{noise}}\left(n_\parallel(\theta, \phi), n_\perp(\theta, \phi)\right) \cdot \left| \det \mathbf{J}_{\Psi^{-1}}\right|
\end{aligned}
\end{equation}
where $\mathbf{J}_{\Psi^{-1}} = \frac{\partial(n_\parallel, n_\perp)}{\partial(\theta, \phi)}$ is the Jacobian matrix of the inverse mapping. According to  Proposition~\ref{prop:diffeomorphism}, we have
\begin{equation*}
    \mathbf{J}_{\Psi^{-1}} = \frac{\partial(n_\parallel, n_\perp)}{\partial(\theta, \phi)}   = \begin{bmatrix}
      \frac{\partial n_\parallel}{\partial \theta} & \frac{\partial n_\parallel}{\partial \phi}\\
        \frac{\partial n_\perp}{\partial \theta} & \frac{\partial n_\perp}{\partial \phi} 
    \end{bmatrix} = \begin{bmatrix}
      \frac{\sqrt{3} \Delta_2f}{2}  \cot\phi \cos \theta & -\frac{\sqrt{3} \Delta_2f}{2}\frac{\sin \theta}{\sin^2 \phi}\\
      \frac{\sqrt{3} \Delta_2f}{2} \frac{\cos \phi \sin\theta}{\sin \phi} & \frac{\sqrt{3} \Delta_2f}{2}\frac{\cos\theta - \cos\phi}{\sin^2\phi}
    \end{bmatrix},
\end{equation*}
and hence
\begin{equation}\label{eq:J-inverse}
  |   \mathbf{J}_{\Psi^{-1}}|   = \frac{3(\Delta_2f)^2}{4} \frac{\cos\phi (1 - \cos\phi \cos\theta)}{\sin^3\phi}.
\end{equation}
Additionally, we have 
\begin{equation}\label{eq:noise-energy}
    n_\parallel^2 + n_\perp^2 = \frac{(\Delta_2f)^2}{4} \left[ 3 \frac{1 + \cos^2\phi - 2\cos\phi\cos\theta}{\sin^2\phi} - 2\sqrt{3} \frac{\cos\phi\sin\theta}{\sin\phi} + 1 \right].
\end{equation} 
Substitute \eqref{eq:diffeomorphism}, \eqref{eq:J-inverse}, and \eqref{eq:noise-energy} into \eqref{eq:pdf-theta-phi}, we get 
\begin{equation*}
    f(\theta,\phi) = \frac{1}{ \sqrt{2\pi}  2^{\frac{M-1}{2}-1}\Gamma\left(\frac{M-1}{2}\right) } \left(\frac{\sqrt{3}\Delta_2f}{2 \sigma}\right)^M  \frac{\cos\phi (1 - \cos\phi \cos\theta)^{M-1}}{\sin^{M+1}\phi} \cdot \exp\left( - \frac{(\Delta_2f)^2}{8\sigma^2} \Phi(\theta, \phi) \right),
\end{equation*}
where $\Phi(\theta, \phi)$ is strictly given by
\begin{equation*}
    \Phi(\theta, \phi) = \frac{3(1 + \cos^2\phi - 2\cos\phi\cos\theta) - 2\sqrt{3}\cos\phi\sin\phi\sin\theta}{\sin^2\phi} + 1.
\end{equation*}

\subsection{Proof of Corollary~\ref{corollary:sphere-plrv}}\label{app:sphere-plrv-proof}

\begin{proof}
  According to (\ref{eq:plrv_reduction}), we have $$\mathrm{PLRV} =   -\frac{1}{2\sigma^2} \left(\|\bm{v}\|^2+2\langle \mathbf{n},\bm{v}\rangle\right) = -\frac{1}{2\sigma^2} \left((\Delta_2f)^2 + 2 (\Delta_2f) n_{\parallel}\right).$$
By plugging \eqref{eq:firstpart} and the coordinate mapping \eqref{eq:diffeomorphism} into the equation above, one can verify that
$\text{PLRV}(\theta, \phi)
    = -\frac{\sqrt{3}(\Delta_2f)^2}{2\sigma^2} \cot\phi \sin\theta.$
\end{proof}

\subsection{Proof of Proposition~\ref{prop:hemisphere-force}}\label{app:hemisphere-force}

\begin{proof}
    Indeed, from
    \[
    \sin\phi
    =
    \frac{\sqrt{3}\,\Delta_2 f\,n_\perp}
    {(\Delta_2 f)^2+(\Delta_2 f)n_\parallel+n_\parallel^2+n_\perp^2},
    \]
    where \(n_\parallel\sim \mathcal N(0,\sigma^2)\) and \(n_\perp\sim \sigma\chi_{M-1}\), the typical size of \(n_\parallel\) remains \(O(1)\), whereas \(n_\perp=O(\sqrt{M})\). Consequently, the numerator grows as \(O(\sqrt{M})\), while the denominator grows as \(O(M)\), yielding
    \[
    \sin\phi = O(M^{-1/2}),
    \qquad
    \phi = O(M^{-1/2}),
    \qquad
    h=\sigma_1\sigma_2=O(M^{-1/2}).
    \]
    Therefore, the hemisphere point cloud is driven toward the equator at rate \(M^{-1/2}\).

    At the same time, the cloud is not merely shifted toward the equator, but also compressed into an increasingly thin equatorial band. Indeed, since
\[
n_\perp \sim \sigma\chi_{M-1},
\]
we have \(n_\perp=\sigma\sqrt{M}+O_p(1)\), while \(n_\parallel=O_p(1)\), where $O_p(\cdot)$ denotes order in probability. Substituting this into
\[
\sin\phi
=
\frac{\sqrt{3}\,\Delta_2 f\,n_\perp}
{(\Delta_2 f)^2+(\Delta_2 f)n_\parallel+n_\parallel^2+n_\perp^2},
\]
gives
\[
\sin\phi
=
\frac{\sqrt{3}\,\Delta_2 f}{\sigma\sqrt{M}}
\bigl(1+O_p(M^{-1/2})\bigr).
\]
Thus, the latitude is not only of order \(M^{-1/2}\), but also increasingly concentrated around this scale. Equivalently, the hemisphere height
\[
h=\sigma_1\sigma_2=\frac12\sin\phi
\]
is confined to an increasingly narrow \(M^{-1/2}\)-scale neighborhood of the equator.
\end{proof}

\end{document}